\definecolor{ccccccc}{RGB}{204,204,204}
\definecolor{cffffff}{RGB}{255,255,255}
\definecolor{cff0000}{RGB}{255,0,0}
\definecolor{c0000ff}{RGB}{0,0,255}
\definecolor{c00ff00}{RGB}{0,255,0}
\newtheorem{theorem}{\bf Theorem}[section]
\newtheorem{lemma}[theorem]{\bf Lemma}
\newtheorem{proposition}[theorem]{\bf Proposition}
\newtheorem{remark}[theorem]{\bf Remark}
\newtheorem{example}[theorem]{\bf Example}
\newcommand{\f}[1]{\mathbf{#1}}
\newcommand{\hf}{\ensuremath{\mathrm{HF}}}
\newcommand{\C}{\mathbb{C}}
\newcommand{\R}{\mathbb{R}}
\journal{Computer Aided Geometric Design}
\begin{document}

\sloppy

\begin{frontmatter}

\title{Hermite interpolation by piecewise polynomial surfaces with polynomial~area~element}

\author[plzen2]{Michal Bizzarri}
\ead{bizzarri@ntis.zcu.cz}

\author[plzen1,plzen2]{Miroslav L\'{a}vi\v{c}ka}
\ead{lavicka@kma.zcu.cz}

\author[praha]{Zbyn\v{e}k \v{S}\'{i}r}
\ead{zbynek.sir@mff.cuni.cz}

\author[plzen1,plzen2]{Jan Vr\v{s}ek\corref{cor1}}
\cortext[cor1]{Corresponding author}
\ead{vrsekjan@kma.zcu.cz}

\address[plzen1]{Department of Mathematics, Faculty of Applied Sciences, University of West Bohemia,
         Univerzitn\'i~8,~306~14~Plze\v{n},~Czech~Republic}

\address[plzen2]{NTIS -- New Technologies for the Information Society, Faculty of Applied Sciences, University of West Bohemia, Univerzitn\'i 8, 306 14 Plze\v{n}, Czech~Republic}

\address[praha]{Mathematical Institute,  Charles University, Sokolovská 83, 186 75 Praha, Czech~Republic}

\begin{abstract}
This paper is devoted to the construction of polynomial 2-surfaces which possess a polynomial area element. In particular we study these surfaces in the Euclidean space $\mathbb R^3$ (where they are equivalent to the PN surfaces) and in the Minkowski space $\mathbb R^{3,1}$ (where they provide the MOS surfaces). We show generally in real vector spaces of any dimension and any metric that the Gram determinant of a parametric set of subspaces is a perfect square if and only if the Gram determinant of its orthogonal complement is a perfect square. Consequently the polynomial surfaces of a given degree with polynomial area element can be constructed from the prescribed normal fields solving a system of linear equations. The degree of the constructed surface depending on the degree and the quality of the prescribed normal field is investigated and discussed. We use the presented approach to interpolate a network of points and associated normals with piecewise polynomial surfaces with polynomial area element and demonstrate our method on a number of examples (constructions of quadrilateral as well as triangular patches).
\end{abstract}

\begin{keyword}
Hermite interpolation \sep PN surfaces \sep MOS surfaces \sep polynomial area element
\end{keyword}

\end{frontmatter}

\section{Introduction}\label{sec intro}

Rational surfaces with Pythagorean normal vector fields (PN surfaces)  were introduced by \cite{Po95} as a surface analogy to Pythagorean hodograph (PH) curves defined previously by \cite{FaSa90}. For a survey of shapes with Pythagorean property see e.g. \citep{Fa08} and references therein. It holds that  PH curves in plane and PN surfaces in space considered as hypersurfaces share some common properties, e.g. they both yield rational offsets. Nevertheless there exist lot of significant differences between these classes of rational varieties. For instance, the curves with Pythagorean hodographs were introduced as  planar {\em polynomial} shapes and a compact formula for their description based on Pythagorean triples of polynomials is available. On the other hand, a description of {\em rational} Pythagorean normal vector surfaces reflecting their dual description was revealed first in \citep{Po95} and it is still not known how to specify these formulas to obtain from them the subclass of polynomial PN surfaces. This could be probably one of the reasons why the PN surfaces do not have as many particular applications as the PH curves. Nonetheless, new attempts to study PN surfaces has again begun recently, see \citep{KoKrVi16,LaSiVr16}.

Indeed, when working with PH curves and PN surfaces then focusing only on the rationality of their offsets can conceal other important properties and it does not offer a full overview of their useful features. In the curve case, another (or maybe the main) very important practical application is based on the fact that the parametric speed (or the length element), and thus also the arc length, of polynomial PH curves is polynomial, too. This is important for formulating efficient real time interpolator algorithms for CNC machines. We recall that the interpolators for general NURBS curves are typically computed using Taylor series expansions. Of course, this approach brings truncation errors caused by omitting higher-order terms. When the Pythagorean hodograph curves are applied for describing the tool path, this problem is overcome. The concept of planar polynomial PH curves was generalized also to spatial polynomial PH curves \citep{FaSa94} which are not hypersurfaces anymore and thus we do not construct their offsets as in the plane case. This can be taken as another reason for preferring the polynomiality of the parametric speed over the rationality of their offsets as a main distinguishing property. Later, planar and spatial PH curves were studied also as rational objects \citep{Po95,FaSi11}. However, we would like to emphasize that for rational PH curves their arc length does not have to be expressible as a rational function of the parameter as the integral of a rational function is not rational, in general.

Analogously to the parametric speed and the arc length in the curve case we recall the area element and the surface area for surfaces. Clearly, the area element, and thus also the surface area, of polynomial PN surfaces is polynomial but the surface area of rational PN surfaces is again not rational, in general. This underlines a prominent role of polynomial PN surfaces and shows a more natural relation between polynomial PH curves to polynomial PN surfaces rather then the rationality of their offsets. Moreover, as the {\em curves with the polynomial/rational line element} (i.e., PH  curves) can be defined in any arbitrary dimension, the same holds also for the {\em surfaces with the polynomial/rational area element}, whose special instances the PN surfaces in 3-space are. Unfortunately, there is not known very much about polynomial PN surfaces. As a particular result we can mention the investigation of a remarkable family of cubic polynomial PN surfaces with birational Gauss mapping, which represent a surface counterpart to the planar Tschirnhausen cubic, the simplest planar polynomial PH curve. A full description of these PN surfaces, among which e.g. the Enneper surface belongs, was presented and their properties were thoroughly discussed in \cite{LaVr11}. Recently an approach for a construction of polynomial PN surfaces based on bivariate polynomials with quaternion coefficients was presented by \cite{KoKrVi16}.

As concerns modelling techniques formulated for PN surfaces, in particular the Hermite interpolation schemes by piecewise PN surfaces, there are not many results from this area. One can find a few indirect algorithms for the interpolations with PN surfaces, where `indirect' means that the resulting surfaces become rational PN only after a suitable reparameterization -- we recall e.g. \citep{JuSa00,BaJuKoLa08}; however these must be always followed by non-trivial trimming of the parameter domain. One can find also a few direct algorithms based on the dual approach, which is especially convenient for PN surfaces, see e.g. \citep{PePo96,LaSiVr16}. Nevertheless these approaches produce rational PN surfaces and are inapplicable when polynomial parameterizations are required. As far as we are aware, the algorithm presented in this paper is the first functional and complex method solving the Hermite problem directly (i.e., without a need of any consequent reparameterization) and formulated without a need of envelope formula (necessary when dual approach is used) and thus yielding polynomial parameterizations.
We will show that in our approach the interpolation problem can be always transformed to solving a system of linear equations. In addition, after a slight modification we present an analogous approach for interpolating with polynomial medial surface transforms yielding rational envelopes (so called MOS surfaces), which are further surfaces playing an important role in solving practical problems originated in technical practice.

The remainder of this paper is organized as follows. Section~$2$ recalls some basic facts concerning curves with polynomial/rational length element (PH and MPH curves) and mainly surfaces with polynomial/rational area element (especially PN and MOS surfaces) that are the principal topic of this paper. A certain generalization of the presented ideas to an $n$-dimensional space and to an arbitrary rational $k$-surface is revealed. In Section~3, we present a simple method for describing and generating all polynomial surfaces with polynomial area element. The results are formulated in the simplest possible way to be later easily applicable for formulating functional algorithms for the Hermite interpolation by piecewise polynomial PN/MOS surfaces. This section contains also a theoretical part devoted to the problem of finding relation between the degree of prescribed normal vector fields, the degree of the obtained surfaces and the dimension of the set of solutions. Efficient tools from the commutative algebra, as e.g. syzygy modules, complexes and Hilbert functions, are used to answer the natural questions, important also for the interpolation. In Section~4, the results from the previous parts are applied to a practical problem of Hermite interpolation by piecewise polynomial surfaces with polynomial area element. Simple methods for smooth surface interpolation using polynomial patches with rational offsets in $\R^3$, or using polynomial medial surface transforms in $\R^{3,1}$ yielding rational envelopes are presented and thoroughly discussed. We will show that in our approach the interpolation problem can be always transformed to solving a system of linear equations. The functionality of the designed algorithms is shown on several examples. Finally, we conclude the paper in Section~5.

\section{Preliminary}\label{sec prelim}

We start with PH curves in plane, and consequently we generalize the approach to an $n$-dimensional space and to an arbitrary rational $k$-surface. Especially, we will focus on 2-surfaces in spaces $\R^3$ and $\R^{3,1}$.

\medskip
A parametric curve $\f x(u)=(x_1(u),x_2(u))^\top$ in $\R^2$ is called a \emph{Pythagorean hodograph curve} ({\em a PH curve} for short)
if there exists a rational function $\sigma(u)$ such that it is satisfied
\begin{equation}\label{eq:PHcond}
x_1'(u)^2 + x_2'(u)^2= \sigma(u)^2.
\end{equation}
This means that for PH curves the squared {\em length element}
\begin{equation}
\mathrm{d}s^2=\f{x}'(u)\!\cdot\!\f{x}'(u)\, \mathrm{d}u^2=\|\f{x}'(u)\|^2\, \mathrm{d}u^2,
\end{equation}
where '$\cdot$' is the standard Euclidean inner product, is a perfect square. Hence, these curves can be also denoted as {\em curves with rational length element}.
Furthermore, this approach is applicable for introducing the PH curves in any dimension and one can speak about PH curves (or curves with rational length element)
in an arbitrary space $\R^n$.  It is evident that all polynomial PH curves in any space $\R^n$ possess polynomial arc length $\ell(u)=\int \|\f{x}'(u)\|\, \mathrm{d}u$.

Next, we follow the same approach for 2-surfaces in $\R^3$ (or in $\R^n$, in general). The squared {\em area element} has the form
\begin{equation}
\mathrm{d}A^2=
\begin{array}{|cc|}
\displaystyle  \f x_u \!\cdot\! \f x_u  &
\displaystyle \f x_u \!\cdot\!  \f x_v \\
\displaystyle  \f x_u \!\cdot\! \f x_v  &
\displaystyle  \f x_v \!\cdot\! \f x_v
\end{array}
\,\,\mathrm{d}u^2\mathrm{d}v^2
=(EG-F^2)\,\mathrm{d}u^2\mathrm{d}v^2,
\end{equation}
where $\f x_u={\partial \f{x}}/{\partial u}$, $\f x_v={\partial \f{x}}/{\partial v}$, and $E= \f x_u \!\cdot\! \f x_u$, $F= \f x_u \!\cdot\! \f x_v$, $G= \f x_v \!\cdot\! \f x_v$ are the coefficients of the first fundamental form. Then a parametric surface $\f x(u,v)$ is called a {\em surface with rational area element}
if there exists a rational function $\sigma(u,v)$ such that it is satisfied
\begin{equation}
EG-F^2=\sigma(u,v)^2.
\end{equation}
Again all polynomial surfaces in space $\R^n$ with polynomial area element  possess polynomial surface area $A(u,v)=\int\!\!\int \sqrt{EG-F^2}\, \mathrm{d}u\mathrm{d}v$.

\medskip
For later use, we mention some fundamental facts extending the previous ideas. Let be given a rational parameterization $\f x(\f u):\R^k\rightarrow \R^{p,q}$, where $\f u=(u_1,\ldots,u_k)$, and $\R^{p,q}$ is a real space of dimension $n=p+q$ equipped with the inner product $\langle\,\cdot \,,\cdot\,\rangle$ of signature $(p,q)$ (especially, if $q=0$ we have the standard Euclidean space, if $q=1$ we have the Minkowski space). We consider a system of tangent vectors $\left(\partial \f x(\f u)/\partial u_1,\ldots,\partial \f x(\f u)/\partial u_k\right)$, or $\big(\f x_1(\f u),\ldots,\f x_k(\f u)\big)$ for short, and compute its corresponding {\em Gram determinant} (or {\em Gramian}\/)
\begin{equation}
\Gamma(\f x_1,\ldots,\f x_k)=\mathrm{det}(g_{ij}),\quad \mbox{ where }\, g_{ij}=\big\langle \f x_i(\f u),\f x_j(\f u) \big\rangle,\quad i,j=1,\ldots,k.
\end{equation}

As known the Gram determinant of given $k$ vectors is equal to the square of the $k$-dimensional volume of the parallelotope spanned by these $k$ vectors. Hence
the squared {\em volume element} has the form
\begin{equation}
\mathrm{d}V^2=\Gamma(\f x_1,\ldots,\f x_k)\,\,\mathrm{d}u_1^2\cdots\mathrm{d}u_k^2.
\end{equation}
To sum up, $\f x(\f u)$ is called a {\em $k$-surface with rational volume element}\/ if there exists a rational function $\sigma(\f u)\in \R(\f u)$ such that
\begin{equation}\label{eq:PG}
\Gamma(\f x_1,\ldots,\f x_k)=\sigma^2(\f u).
\end{equation}

In particular, if $k=1,p=n,q=0$ then \eqref{eq:PG} describes (Euclidean) Pythagorean hodograph curves. For $k=1,p=n-1,q=1$ we obtain Minkowski Pythagorean hodograph (MPH) curves.
If $k=2,p=3,q=1$ then we get the so called MOS surfaces, i.e., medial surfaces obeying a certain sum of squares condition. Finally, when $k=n-1,p=n,q=0$ we arrive at
(Euclidean) hypersurfaces with  rational volume element. As in the curve and surfaces case, a special role is played by polynomial varieties with polynomial volume element as they possess polynomial volume $V(\f u)=\int\!\!\int\!\cdots\!\int \sqrt{\mathrm{det}(g_{ij})}\,\, \mathrm{d}u_1\cdots\mathrm{d}u_k$.

Moreover, as it holds for a hypersurface $\f x(\f u)$
\begin{equation}\label{eq:PN}
\Gamma(\f x_1,\ldots,\f x_{n-1})=\left\| \f x_1\times\cdots\times \f x_{n-1} \right\|^2,
\end{equation}
where $\f x_1\times\cdots\times \f x_{n-1} $ is the generalized cross product providing a normal vector $\f n$, condition \eqref{eq:PG} yields in this case hypersurfaces with \emph{Pythagorean normals} (shortly {\em PN hypersurfaces}\/) in $\R^n$. Their distinguishing property is that they admit two-sided rational $\delta$-offset hypersurfaces
\begin{equation}\label{PNsurf}
\f{x}_{\delta}=\f{x}\pm\delta\frac{\f{n}}{\|\f n\|}=\f{x}\pm\delta\,\frac{\f{x}_1\times\cdots\times\f{x}_{n-1}}{\sigma}.
\end{equation}
It holds that planar PH curves (i.e., curves with rational length element) in $\R^2$ are PN curves (i.e., rational offset curves), and surfaces with rational area element in $\R^3$ are PN surfaces (i.e., rational offset surfaces).

\medskip
As concerns the formulas for rational/polynomial $k$-surfaces with rational volume element (suitable e.g. for formulating interpolation algorithms), these are known only in special cases.
For instance, it was proved in \citep{FaSa90,Ku72} that the coordinates of hodographs
of polynomial planar PH curves and
$\sigma(t)$ form the following Pythagorean triples
\begin{equation}\label{planarPH}
\begin{array}{rcl}
x_1'(u) & = & k(t) \bigl( a^2(u) - b^2(u) \bigr), \\
x_2'(u) & = & 2 k(u) a(t) b(u),\\
\sigma(u) & = & k(u) (a^2(u) + b^2(u)),
\end{array}
\end{equation}
where $a(u)$, $b(u)$, $k(u)\in\mathbb{R}[u]$ are any non-zero polynomials
and $a(u), b(u)$ are relatively prime. The parameterization of the PH curve is then
obtained by integrating the hodograph coordinates from~\eqref{planarPH}. Obviously this approach cannot be used for rational planar PH curves as the integral of a rational function is not rational, in general. Analogous formulas, derived using a similar approach, were found by \cite{FaSa94} for polynomial PH curves in $\R^3$ and by \cite{Mo99} for polynomial MPH curves in $\R^{2,1}$. Later, formulas describing rational PH curves in $\R^3$ and rational MPH curves in $\R^{2,1}$ were presented in \citep{FaSi11,KoLa10}.

The next $k$-surfaces with rational volume element for which compact formulas exist are PN hypersurfaces. In this case, the construction is based on their dual representation. Any rational PN hypersurface can be represented as the envelope of its tangent hyperplanes
\begin{equation}\label{tangents}
\f n(\f u)\cdot \f x=h(\f u),
\end{equation}
where $\f n(\f u)$ is a polynomial normal vector field such that $||\f n(\f u)||^2$ is a perfect square, see \citep{DiHoJu93}, and $h(\f u)$ is a rational function.
Differentiating \eqref{tangents} with respect to $u_i$ gives the system of $n$ linear equations in variables $x_i$
\begin{equation}\label{PN_param}
\f M\cdot \f x =\f H,\qquad\mbox{ where}\quad
\f M =
\left(\f n, \ldots,\frac{\partial \f n}{\partial u_i},\ldots\right)^\top \,\mbox{ and }\,
\f H =
\left(h, \ldots,\frac{\partial h}{\partial u_i},\ldots\right)^\top.
\end{equation}
Solving \eqref{PN_param} we arrive at a general representation $\f x (\f u)= \f M^{-1}\f H$ of PN hypersurfaces with non-degenerate Gaussian image, cf. \citep{Po95}; PN hypersurfaces with degenerate Gaussian image for which $\f M$ is non-invertible, e.g. developable surfaces in $\R^3$, need  a special treatment.

\section{Two remarkable classes of polynomial surfaces with polynomial surface area element}

The method discussed in the previous section, formulated for rational PN hypersurfaces, is not suitable for computing parameterizations of polynomial PN surfaces, coinciding with the class of surfaces with polynomial area element in $\R^3$. And polynomial MOS surfaces as 2-surfaces with polynomial area element in 4-dimensional space $\R^{3,1}$ are not hypersurfaces, thus the presented dual approach cannot be applied inherently. So in what follows, we will reveal another method for describing polynomial surfaces with polynomial area element.

\medskip
When studying varieties with polynomial volume elements then it is sometimes more convenient to prescribe the tangents space (e.g. in case of spatial PH or MPH curves) and sometimes it is more convenient to start with the normal space (e.g. in case of PN surfaces). In the following subsection we will show that both ways are equivalent and thus one can always choose
an approach which is computationally more accessible.

%

\subsection{Gram determinants of $k$-parametric families of vector subspaces and their orthogonal complements}

Consider a set of $0<m<n=p+q$ parameterizations of polynomial vector fields $\R^k\rightarrow\R^{p,q}$ given by $\f u=(u_1,\dots,u_k)\mapsto\f v_i(\f u)$ for $i=1,\dots,m$. Assuming that for almost all $\f u$ the corresponding vectors are linearly independent, we may understand the $m$--tuple $(\f v_1,\dots,\f v_m)$ as a $k$--parametric family of $m$-dimensional subspaces $V(\f u)$. Define the~\emph{reduced Gram determinant}  $\Gamma_0(\f v_1,\dots,\f v_m)$ to be a square-free part of the Gram determinant $\Gamma(\f v_1,\dots,\f v_m)=\det(\langle\f v_i,\f v_j\rangle)_{i,j=1}^m$.

\begin{lemma}
  Let $(\f v_1(\f u),\dots,\f v_m(\f u))$ and $(\f v'_1(\f u),\dots,\f v'_m(\f u))$ be two parameterizations of the same $V(\f u)$. Then there exists a non-zero constant $c\in\R$ such that $\Gamma_0(\f v_1,\dots,\f v_m)=c\cdot \Gamma_0(\f v'_1,\dots,\f v'_m)$,
\end{lemma}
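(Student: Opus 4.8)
The plan is to relate the two bases by a change-of-basis matrix, track how the Gramian transforms, and then remove the resulting square factor at the level of square-free parts via unique factorization. First I would note that, since for almost all $\f u$ both $m$-tuples are bases of the same subspace $V(\f u)$, every $\f v_i$ is a linear combination $\f v_i=\sum_{j}a_{ij}\f v'_j$. Collecting the vectors as the rows of two $m\times n$ matrices $\f V$ and $\f V'$, this reads $\f V=A\,\f V'$ with $A=(a_{ij})$. Because $\f V'$ has full row rank $m$ on a dense open set, one may select $m$ coordinate columns on which the corresponding $m\times m$ minor of $\f V'$ is a non-vanishing polynomial and solve the resulting square system by Cramer's rule; this exhibits the entries $a_{ij}(\f u)$ as rational functions of $\f u$, and by the rank (consistency) condition the same $A$ reproduces all $n$ coordinates of each $\f v_i$. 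Note $\det A\not\equiv 0$, since $A$ converts one basis into another.

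Next I would use bilinearity of the form. With $G=(\langle\f v_i,\f v_j\rangle)_{i,j=1}^m$ and $G'=(\langle\f v'_i,\f v'_j\rangle)_{i,j=1}^m$ one obtains $G=A\,G'\,A^\top$, and hence
\[
\Gamma(\f v_1,\dots,\f v_m)=(\det A)^2\,\Gamma(\f v'_1,\dots,\f v'_m).
\]
Since $\det A\not\equiv 0$, this already shows that the two Gramians vanish identically together, so we may assume both are non-zero; they then differ by the square of the rational function $\det A$, and it only remains to see that this factor disappears on the level of square-free parts.

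This last reduction is where the real work lies, and the main obstacle is precisely that $\det A$ is rational rather than polynomial, so $(\det A)^2$ is a square of a rational function that cannot simply be dropped. To deal with it I would write $\det A=P/Q$ with coprime $P,Q\in\R[\f u]$ and clear denominators, obtaining the polynomial identity $Q^2\,\Gamma(\f v_1,\dots,\f v_m)=P^2\,\Gamma(\f v'_1,\dots,\f v'_m)$. Factoring the Gramians according to the definition of the reduced Gram determinant as $\Gamma=\Gamma_0\,h^2$ and $\Gamma'=\Gamma'_0\,(h')^2$ with $\Gamma_0,\Gamma'_0$ square-free, the identity becomes $\Gamma_0\,(Qh)^2=\Gamma'_0\,(Ph')^2$.

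Finally I would conclude by a parity argument in the UFD $\R[\f u]$. For any irreducible $p$, its multiplicity in $\Gamma_0\,(Qh)^2$ is odd exactly when $p\mid\Gamma_0$, since $\Gamma_0$ is square-free and the remaining factor is a perfect square; the same holds on the right with $\Gamma'_0$. Equating the two sides forces $p\mid\Gamma_0\iff p\mid\Gamma'_0$ for every $p$, and since both $\Gamma_0$ and $\Gamma'_0$ are square-free they then share the same irreducible factors, each to multiplicity one. Hence $\Gamma_0$ and $\Gamma'_0$ are associates, i.e. $\Gamma_0=c\,\Gamma'_0$ for a non-zero constant $c\in\R$, as claimed. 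Throughout, the identities are between rational functions and polynomials, so verifying them on the dense open set where both families are bases already guarantees that they hold identically.
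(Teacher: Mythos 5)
Your proof is correct and follows the same route as the paper: relate the two parameterizations by a change-of-basis matrix $A(\f u)$, observe via $G=A\,G'\,A^\top$ that the Gramians differ by the factor $(\det A)^2$, and discard that square when passing to square-free parts. You are in fact more careful than the paper, whose proof ends with the one-line remark that $(\det A)^2$ ``is omitted when taking the square-free part'': since $\det A$ is only a rational function of $\f u$, your clearing of denominators ($\det A = P/Q$ coprime) and the parity argument on multiplicities of irreducible factors in the UFD $\R[\f u]$ supply precisely the justification that step needs.
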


\begin{proof}
Let $A(\f u)$ be a change-of-basis matrix such that $A(\f u)\f v_i(\f u)=\f v'_i(\f u)$. Then the Gram determinants are linked by the relation
\begin{equation}
\det(\langle\f v'_i,\f v'_j\rangle)_{i,j=1}^m = \left(\mathrm{det}(A)\right)^2 \cdot \det(\langle\f v_i,\f v_j\rangle)_{i,j=1}^m
\end{equation}
Since $\left(\mathrm{det}(A)\right)^2$ is a square it is omitted when taking the square-free part. Thus the reduced Gram determinants may differ only by a constant.
\end{proof}

Hence, the quantity $\Gamma_0(\f v_1,\dots,\f v_m)$ does not depend on a particular parametrization and we may define the reduced Gram determinant $\Gamma_0(V(\f u))$ of the $k$-parametric set of $m$-subspaces $V(\f u)=\mathrm{span}\{\f v_1,\dots,\f v_m\}$.

\smallskip
Recall now that for a subspace $V\subset\R^{p,q}$ the totally orthogonal subspace (or the orthogonal complement)  $V^\perp$ is defined as the set of all vectors from $\R^{p,q}$ orthogonal to all vectors of $V$.

\begin{lemma}\label{gramiany}
$\Gamma_0(V(\f u))=c\cdot \Gamma_0(V^\perp(\f u))$.
\end{lemma}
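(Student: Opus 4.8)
The plan is to prove the relation $\Gamma_0(V(\f u))=c\cdot\Gamma_0(V^\perp(\f u))$ by exhibiting a \emph{single algebraic identity} connecting the two Gram determinants, and then extracting square-free parts. By the previous lemma, both $\Gamma_0(V)$ and $\Gamma_0(V^\perp)$ are well-defined up to a nonzero constant independent of the chosen spanning vectors, so I am free to work with any convenient parameterization. First I would choose a basis $(\f v_1,\dots,\f v_m)$ of $V(\f u)$ and complete it by vectors $(\f w_1,\dots,\f w_{n-m})$ spanning $V^\perp(\f u)$, so that together they form a basis of the whole space $\R^{p,q}$ (generically, and for almost all $\f u$, this is possible since $V\cap V^\perp=\{0\}$ away from degenerate loci).

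The key computation is to relate the three Gramians involved: the full $n\times n$ Gramian $G$ of the combined basis $(\f v_1,\dots,\f v_m,\f w_1,\dots,\f w_{n-m})$, the $m\times m$ block $G_V=(\langle\f v_i,\f v_j\rangle)$, and the $(n-m)\times(n-m)$ block $G_W=(\langle\f w_a,\f w_b\rangle)$. Because every $\f v_i$ is orthogonal to every $\f w_a$, the off-diagonal blocks vanish and $G$ is block diagonal, giving the factorization
\begin{equation}\label{eq:blockdiag}
\det G = \Gamma(\f v_1,\dots,\f v_m)\cdot\Gamma(\f w_1,\dots,\f w_{n-m}) = \Gamma(V)\cdot\Gamma(V^\perp).
\end{equation}
The other ingredient is to express $\det G$ in terms of the ambient metric: if $B(\f u)$ denotes the matrix whose columns are the coordinates of the chosen basis vectors with respect to a fixed orthonormal basis of $\R^{p,q}$, and $J$ is the (constant) signature matrix $\mathrm{diag}(1,\dots,1,-1,\dots,-1)$, then $G=B^\top J B$, whence $\det G=(\det B)^2\det J=\pm(\det B)^2$. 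The crucial point is that $(\det B)^2$ is a perfect square as a polynomial (or rational function) in $\f u$, and $\det J=\pm1$ is a nonzero constant.

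Combining these, equation~\eqref{eq:blockdiag} reads $\Gamma(V)\cdot\Gamma(V^\perp)=\pm(\det B)^2$. Now I take square-free parts. On the right the factor $(\det B)^2$ contributes nothing to the square-free part (up to the harmless sign $\pm1$), so the square-free part of the product $\Gamma(V)\cdot\Gamma(V^\perp)$ is a constant. But in a polynomial ring (a unique factorization domain) the square-free part of a product equals the square-free part of the product of the square-free parts, and a product of two square-free quantities has trivial square-free part only when every irreducible factor of one cancels against an identical factor of the other to even multiplicity. This forces $\Gamma_0(V)$ and $\Gamma_0(V^\perp)$ to share the same irreducible factors, hence to agree up to a nonzero constant $c$, which is exactly the claim.

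The main obstacle I anticipate is making the square-free-part argument rigorous over the appropriate coefficient field: one must argue within $\R[\f u]$ (or $\R(\f u)$) as a UFD and carefully track that $\Gamma(V)=s_V\cdot q_V^2$ and $\Gamma(V^\perp)=s_{V^\perp}\cdot q_{V^\perp}^2$ with $s_V,s_{V^\perp}$ square-free, so that $s_V\, s_{V^\perp}\,(q_V q_W)^2=\pm(\det B)^2$ forces $s_V s_{V^\perp}$ itself to be a square; since each $s$ is square-free, this is possible only if $s_V$ and $s_{V^\perp}$ have identical prime factorizations, i.e.\ $s_V=c\cdot s_{V^\perp}$. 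A secondary technical point is the genericity hypothesis: the identity $V\cap V^\perp=\{0\}$ can fail on a lower-dimensional locus (and in indefinite signature even on subtler sets where $\f x$ becomes isotropic), but since all quantities are polynomials, an identity valid on a Zariski-dense open set extends to a polynomial identity everywhere, so the constant $c$ is globally valid.
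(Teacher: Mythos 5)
Your proof is correct, but it takes a genuinely different route from the paper. The paper works in the exterior algebra: it sets $\f v=\f v_1\wedge\dots\wedge\f v_k$, uses the Hodge star to produce a basis $\f w_i$ of $V^\perp$ with $\star\f v=\f w_1\wedge\dots\wedge\f w_{n-k}$, and computes $\det(\langle\f w_i,\f w_j\rangle)=\star\f v\wedge\star\star\f v=(-1)^{q\,\mathrm{mod}\,2}\,\f v\wedge\star\f v$, so that for this particular basis the two Gram determinants agree \emph{identically up to a sign}; the reduced parts then agree up to a constant by the preceding lemma. You instead prove only the weaker product identity $\Gamma(V)\cdot\Gamma(V^\perp)=\pm(\det B)^2$, obtained from block-diagonality of the combined Gramian together with $G=B^\top J B$, and then recover the claim by a unique-factorization argument: $s_V\,s_{V^\perp}\cdot(\text{square})=\pm(\text{square})$ with $s_V,s_{V^\perp}$ square-free forces every irreducible factor of $s_V$ to occur in $s_{V^\perp}$ and conversely, hence $s_V=c\,s_{V^\perp}$. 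That UFD step is sound and is exactly the right way to pass from ``the product is a square'' to equality of square-free parts. What each approach buys: yours is more elementary (no Hodge star, no induced inner product on $\bigwedge^k$) and isolates the clean structural fact that $\Gamma(V)\Gamma(V^\perp)$ is a perfect square times $\det J$; the paper's argument is sharper, giving the explicit sign $(-1)^{q}$ relating the full Gramians for a canonical choice of basis, which the product identity alone cannot see. Two technical points you should tighten: (i) for the factorization argument to live in the UFD $\R[\f u]$ you need \emph{polynomial} spanning vectors of $V^\perp(\f u)$ (solve $\langle\f w,\f v_i\rangle=0$ over $\R(\f u)$ and clear denominators; rescaling the $\f w_a$ only multiplies $\Gamma(V^\perp)$ by a square, so nothing is lost); (ii) your genericity reduction needs the remark that if $V\cap V^\perp\neq\{0\}$ identically then both Gramians vanish identically and the statement is vacuous, while otherwise $\det B\not\equiv 0$ and the polynomial identity extends from the Zariski-dense set where the combined family is a basis, as you say.
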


\begin{proof}
To prove this lemma we will use the tools from exterior algebra. Let $\f a=\f a_1\wedge\dots\wedge\f a_k$ and $\f b=\f b_1\wedge\dots\wedge\f b_k$ be two $k$-vectors. We recall that  $\wedge$ is the {\em exterior product}.  Next, the product $\langle\cdot,\cdot\rangle$ on $\R^{p,q}$ induces the product on the $k$-th exterior power $\bigwedge^k(\R^{p,q})$ via the relation
\begin{equation}
\langle\f a,\f b\rangle = \det(\langle \f a_i,\f b_j \rangle )_{i,j,=1}^k.
\end{equation}
Recall that for $n=p+q$ the {\em Hodge star operator} is the isomorphism $\star:\bigwedge^k(\R^{p,q})\rightarrow\bigwedge^{n-k}(\R^{p,q})$ fulfilling
\begin{equation}
\f a\wedge(\star\f b)=\langle\f a,\f b\rangle \f I,
\end{equation}
where $\f a,\f b$ are as above and $\f I$ is the normalized $n$--vector. For $\f a=\f a_1\wedge\dots\wedge \f a_k$ the $\star\f a$ can be written as $\tilde{ \f  a}_1\wedge\dots\wedge\tilde {\f a}_{n-k}$ where $\{\tilde{ \f a}_i\}$ is the basis of the subspace totally orthogonal to the one spanned by $\f a_i$'s.

Now, let $V(\f u)$ be spanned by the vectors $\f v_i(\f u)$ and set $\f v(\f u)=\f v_1(\f u)\wedge\dots\wedge \f v_k(\f u)$. We have
\begin{equation}\label{eq gr V}
\det(\langle \f v_i(\f u),\f v_j(\f u) \rangle )_{i,j=1}^k\f I= \f v(\f u)\wedge\star\f v(\f u).
\end{equation}
Since there exists a basis $\f w_i(\f u)$ of $V^\perp(\f u)$ such that $\star\f v(\f u)=\f w_1(\f u)\wedge\dots\wedge\f w_{n-k}(\f u)$ we may write
\begin{equation}\label{eq gr Vperp}
\det(\f w_i(\f u),\f w_j(\f u))_{i,j=1}^{n-k}=\star\f v(\f u)\wedge\star\star\f v(\f u)=(-1)^{q\,\text{mod}\,2}\f v(\f u)\wedge\star\f v(\f u),
\end{equation}
where we have used the formulas  $\star\star\f v=(-1)^{k(n-k)}(-1)^{q\,\text{mod}\,2}\f v$ and $\star\f v\wedge\f v=(-1)^{k(n-k)}\f v\wedge\star\f v$.
Comparing \eqref{eq gr V} with \eqref{eq gr Vperp} we see that the reduced Gram determinants of subspaces differ only by multiplication of a non-zero constant.
\end{proof}

\subsection{Polynomial PN surfaces in $\R^3$}\label{poly PN}

We recall Lemma~\ref{gramiany} and reformulate the statement for 2-surfaces in 3-dimensional space. Let be given a polynomial parameterized surface $\mathbf x(u,v)$. Consider the tangent space $V(u,v)=\mathrm{span}\{\f x_u(u,v),\f x_v(u,v)\}$ and the normal space $V^\perp(u,v)=\mathrm{span}\{\f n(u,v)\}$. Then it holds
\begin{equation}\label{condPN}
\Gamma(\f x_u,\f x_v)=f^2\,\Gamma(\f n),
\end{equation}
where $f(u,v)\in\R(u,v)$ is a non-zero factor.

Thus when looking for some parameterized polynomial PN surface (polynomial surface with polynomial surface element in $\R^3$) it is natural to start with a polynomial normal vector field $\f n(u,v)$ of degree $k$ such that $||\f n(u,v)||^2$ is a perfect square. Its parameterization can be easily gained from polynomial Pythagorean quadruples, cf. \citep{DiHoJu93}. By \eqref{condPN} the Pythagorean property of $\f n(u,v)$ guarantees the polynomiality of area element.

In addition, to determine an associated polynomial PN parameterization of degree $\ell+1$ in a direct way, we have to find suitable polynomial vector fields
\begin{equation}\label{PN_PuPv}
\begin{array}{c}
\displaystyle
\f q(u,v)=
\left(
\sum_{i+j\leq\ell}
\mbox{\hspace*{-0ex}}q_{1ij}{u^iv^j},
\sum_{i+j\leq\ell}
\mbox{\hspace*{-0ex}}q_{2ij}{u^iv^j},
\sum_{i+j\leq\ell}
\mbox{\hspace*{-0ex}}q_{3ij}{u^iv^j}
\right)^{\top},\\[4ex]
\displaystyle
\f r(u,v)=
\left(
\sum_{i+j\leq\ell}
\mbox{\hspace*{-0ex}}r_{1ij}{u^iv^j},
\sum_{i+j\leq\ell}
\mbox{\hspace*{-0ex}}r_{2ij}{u^iv^j},
\sum_{i+j\leq\ell}
\mbox{\hspace*{-0ex}}r_{3ij}{u^iv^j}
\right)^{\top},
\end{array}
\end{equation}
which will play the role of $\f x_u$, $\f x_v$, respectively. Thus, $\f q$, $\f r$ must satisfy the following conditions
\begin{equation}\label{eq PN soustava}
\begin{array}{rcl}
\f q \cdot \f n & \equiv & 0,\\
\f r \cdot \f n & \equiv & 0,\\[1ex]
\displaystyle \frac{\partial\f q}{\partial v} - \displaystyle \frac{\partial\f r}{\partial u} & \equiv & 0,
\end{array}
\end{equation}
where the third equation expresses the condition for the~integrability.  Since a polynomial of degree $n$ in two variables possesses $\binom{n+2}{2}$ coefficients, the problem is now transformed to solving a system of $2\binom{k+\ell+2}{2}+3\binom{\ell+1}{2}$ homogeneous linear equations with $6\binom{\ell+2}{2}$ unknowns $q_{1ij}, q_{2ij}, q_{3ij},r_{1ij},r_{2ij},r_{3ij}$. The corresponding PN parameterization is then obtain as
\begin{equation}\label{Eq:integrace}
\f x(u,v)=\int\f {q}(u,v)\,\mathrm{d}u+\f {c}(v),
\mbox{ where}
\qquad
\f{c}(v)=\left[\int \f{r}(u,v)\,\mathrm{d}v-\int \f{q}(u,v)\,\mathrm{d}u\right]_{u=0}.
\end{equation}

For $\ell$ large enough, system of equations \eqref{eq PN soustava} is solvable. In this case we arrive at a polynomial PN parameterization such that $\f x_u \times \f x_v=f(u,v)\f n(u,v)$, where $f(u,v)$ is a factor balancing suitably the degrees of $\f n$ and $\f x$. We can formulate

\begin{proposition}\label{PN_howto}
Given in $\R^3$ a polynomial vector field $\f n(u,v)$ such that $||\f n(u,v)||^2$ is a perfect square. Then there exists a polynomial PN surface, i.e., a polynomial surface with polynomial surface area element, possessing $\f n(u,v)$ as its normal vector field.
\end{proposition}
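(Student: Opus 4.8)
The plan is to show that the homogeneous linear system \eqref{eq PN soustava} has a nontrivial solution once the polynomial degree bound $\ell$ is chosen large enough, and that from any such solution the integration formula \eqref{Eq:integrace} produces a genuine polynomial PN surface with the prescribed normal field. The first and decisive observation is that the existence of \emph{some} vector fields $\f q,\f r$ satisfying the first two (orthogonality) conditions $\f q\cdot\f n\equiv 0$ and $\f r\cdot\f n\equiv 0$ together with the integrability condition $\partial\f q/\partial v=\partial\f r/\partial u$ is guaranteed in principle by Lemma~\ref{gramiany}: since $\|\f n\|^2$ is a perfect square, the reduced Gram determinant of the normal line $V^\perp=\mathrm{span}\{\f n\}$ is (up to a constant) a perfect square, hence by the lemma so is the reduced Gram determinant of the tangent plane $V=V^\perp{}^\perp$, which is exactly the statement that a suitable tangent frame has polynomial area element. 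Thus the target surface exists as an abstract object; what remains is to realize it by a genuine \emph{polynomial} parameterization.

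First I would set up the counting argument for solvability. For a fixed degree bound $\ell$ the unknowns number $6\binom{\ell+2}{2}$, while the constraints number $2\binom{k+\ell+2}{2}+3\binom{\ell+1}{2}$, as stated. Both sides are quadratic polynomials in $\ell$; the number of unknowns grows like $3\ell^2$ whereas the number of equations grows like $\tfrac{5}{2}\ell^2$ plus lower-order terms involving $k$. Because the leading coefficient on the unknown side strictly exceeds that on the equation side, for $\ell$ sufficiently large the dimension of the space of unknowns outstrips the number of equations, so the homogeneous system necessarily admits a nontrivial solution $(\f q,\f r)\neq(\mathbf 0,\mathbf 0)$. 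I would make this precise by comparing the two quadratics and exhibiting an explicit threshold $\ell_0(k)$ beyond which $6\binom{\ell+2}{2}>2\binom{k+\ell+2}{2}+3\binom{\ell+1}{2}$.

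Next I would verify that a nontrivial solution yields a bona fide surface. The integrability condition $\partial\f q/\partial v-\partial\f r/\partial u\equiv 0$ guarantees that the $1$-form $\f q\,\mathrm du+\f r\,\mathrm dv$ is closed, hence (on the simply connected domain) exact, so the line integral in \eqref{Eq:integrace} is path-independent and defines a single-valued polynomial map $\f x(u,v)$ with $\f x_u=\f q$ and $\f x_v=\f r$; the corrective term $\f c(v)$ is precisely what reconciles the two partial integrations. The orthogonality relations $\f x_u\cdot\f n=\f x_v\cdot\f n\equiv 0$ then force $\f n$ to be parallel to $\f x_u\times\f x_v$, so that $\f x_u\times\f x_v=f\,\f n$ for a scalar factor $f(u,v)$, and by \eqref{condPN} the area element satisfies $\Gamma(\f x_u,\f x_v)=f^2\,\|\f n\|^2$, a product of two perfect squares and hence itself a perfect square. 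This is exactly the polynomial PN property. I would also note the nondegeneracy caveat: one must choose a solution for which $\f q,\f r$ are generically linearly independent (equivalently $f\not\equiv 0$), so that $\f x$ is a genuine surface rather than a curve or a point.

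The main obstacle I expect is the last point, ruling out degenerate solutions. The counting argument produces \emph{some} nonzero $(\f q,\f r)$, but a priori this could be a spurious solution with $\f q\times\f r\equiv\mathbf 0$, yielding no honest surface. The cleanest remedy is to observe that the abstract surface furnished by Lemma~\ref{gramiany} already gives at least one nondegenerate tangent frame (after clearing denominators to make it polynomial and raising the degree to the working $\ell$), which therefore sits inside the solution space and certifies that the nondegenerate locus of the solution space is nonempty. I would phrase the conclusion so that ``for $\ell$ large enough'' covers both the dimension count and the existence of a nondegenerate member, and remark that developable-type degenerations (where $f$ could vanish identically) are excluded precisely because $\f n$ is a prescribed nonzero normal field.
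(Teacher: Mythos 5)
Your core skeleton matches the paper's own argument: the proposition there is stated precisely as a summary of the preceding discussion, namely set up the homogeneous linear system \eqref{eq PN soustava}, observe that for $\ell$ large enough it has nontrivial solutions (the paper records this more sharply through the dimension count \eqref{Eq:odhad} built from $\Lambda(\ell,k)$ and $\Omega(\ell)$, where your cruder comparison of $6\binom{\ell+2}{2}\sim 3\ell^2$ unknowns against $2\binom{k+\ell+2}{2}+3\binom{\ell+1}{2}\sim\tfrac{5}{2}\ell^2$ equations is the same idea), integrate the closed form $\f q\,\mathrm{d}u+\f r\,\mathrm{d}v$ via \eqref{Eq:integrace}, and read off the PN property from $\f x_u\times\f x_v=f\,\f n$ together with \eqref{condPN}. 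Those three steps of your proposal are correct and are essentially the paper's proof.

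The genuine gap is in both places where you lean on Lemma~\ref{gramiany}. That lemma is a pointwise statement about Gram determinants: given $\f n$ with $\|\f n\|^2$ a perfect square, it tells you that \emph{any} polynomial frame spanning the plane field $\f n^\perp$ has a Gramian which is a perfect square up to a constant. It does \emph{not} assert the existence of a frame satisfying the integrability condition $\partial\f q/\partial v=\partial\f r/\partial u$, and hence it produces no surface at all --- integrability is a condition on how the frame varies with $(u,v)$, about which the lemma is silent, and it is exactly the nontrivial constraint the linear system has to overcome. So your opening claim that ``the target surface exists as an abstract object'' is a non sequitur, and, more seriously, your remedy for the degeneracy problem is circular: the ``nondegenerate tangent frame furnished by Lemma~\ref{gramiany}'' that you want to exhibit inside the solution space is precisely the object whose existence the proposition asserts. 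The concern itself is legitimate --- the dimension count only yields some nonzero $(\f q,\f r)$, which could a priori satisfy $\f q\times\f r\equiv\f 0$ --- but closing it needs a different argument; for instance, writing $\f q=a_1\f p_1+a_2\f p_2$ and $\f r=b_1\f p_1+b_2\f p_2$ in the $\mu$-basis $\f p_1,\f p_2$ of $\mathrm{Syz}(\f n)$ from Theorem~\ref{thm free module} (so that $f=a_1b_2-a_2b_1$), one would have to show that the solution space of \eqref{eq PN soustava} is not contained in the determinantal locus $a_1b_2-a_2b_1\equiv 0$. The paper leaves this nondegeneracy point implicit as well, but it does not rest any step on a misreading of Lemma~\ref{gramiany}; as written, your patch fails.
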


When computing $\f q(u,v)$ orthogonal to a given normal field of degree $k$ it is always necessary to prescribe first a suitable degree $\ell$ for which we have guaranteed the existence of $\f q$. This degree is of course in a direct relation to the dimension of the solution, which depends on the number of equations and unknowns. From this reason we will study the independence of the linear equations in the system.


For the normal field  $\f n(u,v)=(n_1(u,v),n_2(u,v),n_3(u,v))$ the set of all vector fields $\f q=(q_1,q_2,q_3)\in\R^3[u,v]$ orthogonal to $\f n$ forms a module over the ring $\R[u,v]$. This is called a \emph{syzygy module}, i.e.,
\begin{equation}
  \mathrm{Syz}(\f n)=\{\f q \in \R^3[u,v]\mid\, \f q\cdot\f n\equiv 0\}.
\end{equation}

\begin{theorem}\label{thm free module}
  The $\mathrm{Syz}(\f n)$ is a free module of rank two.  Moreover, two vector fields $\f q(u,v)$ and $\f r(u,v)$ form its basis if and only if there exists a constant $c\in\R$ such that $\f q(u,v)\times\f r(u,v)=c\,\f n(u,v)$.
\end{theorem}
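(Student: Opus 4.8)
The plan is to work over the polynomial ring $R=\R[u,v]$ and to view $\mathrm{Syz}(\f n)$ as the kernel of the $R$-linear map $\varphi\colon R^3\to R$, $\f q\mapsto\f q\cdot\f n$, whose image is the ideal $I=(n_1,n_2,n_3)$. I would first divide $\f n$ by $\gcd(n_1,n_2,n_3)$: this scales $\varphi$ by a nonzerodivisor and leaves $\mathrm{Syz}(\f n)$ unchanged, so I may assume $\f n\neq\f 0$ is primitive. Extending scalars to $K=\R(u,v)$, the single equation $\f q\cdot\f n=0$ cuts out a $2$-dimensional subspace of $K^3$, so $\mathrm{Syz}(\f n)$ is a torsion-free $R$-module of rank $2$.

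To obtain freeness I would use that $R$ is regular of global dimension $2$ (Hilbert syzygy theorem). Then $\mathrm{pd}_R(R/I)\le 2$, hence $\mathrm{pd}_R(I)\le 1$, and the exact sequence $0\to\mathrm{Syz}(\f n)\to R^3\xrightarrow{\varphi}I\to 0$ shows $\mathrm{Syz}(\f n)$ is projective; by the Quillen--Suslin theorem it is then free, and of rank $2$ by the rank count. Equivalently, this is the content of the Hilbert--Burch theorem, which furnishes a free resolution $0\to R^2\xrightarrow{[\f q_0\,|\,\f r_0]}R^3\xrightarrow{\varphi}I\to 0$ in which the columns $\{\f q_0,\f r_0\}$ form a basis of $\mathrm{Syz}(\f n)$.

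For the cross-product characterization the decisive fact to record is that \emph{any} basis $\{\f q_0,\f r_0\}$ satisfies $\f q_0\times\f r_0=c_0\,\f n$ for a nonzero constant $c_0$. Over $K$ the independent vectors $\f q_0,\f r_0$ span the plane $\f n^\perp$ (note $\f n\cdot\f n\neq 0$ as a rational function), so $\f q_0\times\f r_0$, being orthogonal to that plane, is a rational multiple $g\,\f n$; primitivity of $\f n$ forces $g\in R$. Hilbert--Burch identifies the components of $\f q_0\times\f r_0$ (the maximal minors of $[\f q_0\,|\,\f r_0]$) as generators of $I$ up to a single nonzerodivisor factor, which yields a relation of the form $I=(\text{factor})\,g\,I$; a determinant-trick (Nakayama-type) argument applied to the faithful finitely generated ideal $I$ then forces the factor, and hence $g$ itself, to be a unit, i.e.\ a nonzero constant. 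Both implications of the theorem follow formally from this. Any pair $\f q,\f r\in\mathrm{Syz}(\f n)$ is related to the fixed basis by a matrix $T$ over $R$, and multilinearity of the cross product gives $\f q\times\f r=\det(T)\,(\f q_0\times\f r_0)=\det(T)\,c_0\,\f n$. If $\{\f q,\f r\}$ is a basis then $\det(T)$ is a unit, so $c:=\det(T)c_0$ is a nonzero constant; conversely, if $\f q\times\f r=c\,\f n$ with $c\neq 0$, then first $\f q\cdot\f n=\f r\cdot\f n=0$ (each vector is orthogonal to its own cross product), so $\f q,\f r\in\mathrm{Syz}(\f n)$, and comparing $c\,\f n=\det(T)c_0\,\f n$ in the domain $R$ gives $\det(T)=c/c_0\in\R\setminus\{0\}$, whence $T\in\mathrm{GL}_2(R)$ and $\{\f q,\f r\}$ is a basis.

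The main obstacle is precisely the upgrade from \emph{parallel} to \emph{constant multiple}: the generic linear-algebra argument only shows $\f q_0\times\f r_0$ is proportional to $\f n$ over the function field, and the naive candidate generators, the cofactor syzygies $\f n\times\f e_i$, fail to generate $\mathrm{Syz}(\f n)$ as soon as $n_1,n_2,n_3$ have a common zero (the Koszul complex is not exact in dimension two). Producing one honest basis and proving that the proportionality factor is a unit is exactly what the Hilbert--Burch structure theorem delivers; once such a basis with $\f q_0\times\f r_0=c_0\,\f n$ is fixed, the remaining equivalence is pure bookkeeping with change-of-basis matrices and the multiplicativity of the determinant.
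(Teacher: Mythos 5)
Your proof is correct, but it cannot be matched step-for-step against the paper's, because the paper contains no proof of this theorem at all: it merely asserts that the argument ``would directly follow the ideas and results on syzygies of four polynomials in two variables'' of Chen--Cox--Liu and refers the reader to that paper. What you have written is the missing self-contained argument for the three-polynomial case, and it runs on the same fuel as the cited $\mu$-basis theory: $\mathrm{Syz}(\f n)$, as the kernel of $\R[u,v]^3\to I=(n_1,n_2,n_3)$, is projective because $\R[u,v]$ has global dimension two, hence free by Quillen--Suslin, and of rank $2$ by localizing at $\R(u,v)$. Where your write-up adds genuine value is the ``constant multiple'' step: identifying the components of $\f q_0\times\f r_0$ with the maximal minors of the presentation matrix and invoking Hilbert--Burch together with the determinant trick to force the proportionality factor $g$ to be a unit is a clean, citable replacement for the adaptation the paper leaves implicit, and it isolates exactly why the naive cofactor syzygies $\f n\times\f e_i$ fail in the presence of base points (the Koszul complex is then not exact), which is the same phenomenon the paper's Appendix discusses for a different purpose. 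The final change-of-basis bookkeeping, including reading $c\neq 0$ into the ``if'' direction, is exactly right. One caveat you should state openly rather than bury in a ``WLOG'': dividing by $\gcd(n_1,n_2,n_3)$ leaves $\mathrm{Syz}(\f n)$ unchanged but changes the right-hand side of the cross-product identity, so for non-primitive $\f n$ the ``moreover'' part of the theorem is simply false --- the cross product of any basis is then a nonzero constant multiple of $\f n/\gcd(n_i)$, never of $\f n$ itself. Primitivity is thus a hypothesis the theorem tacitly requires (the paper imposes $\gcd(n_i)=1$ only in the subsequent lemma), and it is also precisely what you use to promote $g$ from a rational function to a polynomial; with that hypothesis made explicit, your proof is complete.
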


\begin{proof}
As the particular steps of the proof would directly follow the ideas and results on syzygies of four polynomials in two variables from \citep{CheCoLi05} we omit it and refer the readers to the mentioned paper.

\end{proof}

\begin{example}\label{ex mu basis}
  Let $\f n=(2u,2v,1-u^2-v^2)$ be a polynomial normal field related to a parameterization of the unit sphere.  It can be easily verified that the two vector fields
  \begin{equation}
 \f q=(u^2-1,uv,2u) \quad\text{and}\quad\f r=(uv, v^2-1,2 v)
  \end{equation}
  fulfils $\f q\times\f r=\f n$ and thus they form a basis of $\mathrm{Syz}(\f n)$. In other words any polynomial vector field $\f p$ orthogonal to $\f n$ can be uniquely written as $\f p = a \, \f q+ b \, \f r$ for some polynomials $a,b\in\R[u,v]$.
\end{example}

\begin{remark}\rm
Let us demonstrate in more detail the main added value of knowing a basis of $\mathrm{Syz}(\f n)$, i.e., that any vector field orthogonal to $\f n$ can be uniquely generated as an algebraic combination  of this basis. In this situation, one does not have to consider (and thoroughly discuss) situations when polynomial vector fields are obtained from generating set using rational functions as multiplying coefficients. Moreover, then the fundamental question must read: Which rational coefficients yield polynomial combinations? We recall e.g. Section 4.1 in \cite{KoKrVi16} in which the generating set $\big\{ (-u^2+v^2+1,-2uv,2u), (2uv,-u^2+v^2+1,-2v) \big\}$ (not being a basis) is used for determining cubic polynomial PN surfaces applying particular quadratic rational functions.
\end{remark}

Let $\f n(u,v)$ be a polynomial vector field of degree $k$ and in addition assume $\gcd(n_i)=1$. Then there exist only finitely many points $(u,v)$ such that $n_i(u,v)=0$, for $i=1,2,3$. These points are called \emph{base points} of the vector field. The consecutive result depends on the existence of such base points, which must be considered over $\C$ and also at infinity
(i.e., common roots of the terms of $n_i(u,v)$ of degree $k$).

\begin{lemma}\label{thm jakkoliv}
The system of linear equations $\f q \cdot \f n \equiv 0$ has the full rank if and only if
$\f n$ is basepoint-free. Then the dimension of the~set of vector fields $\f q$ of degree at most $\ell$ orthogonal to $\f n$ is equal to
\begin{equation}
\Lambda(\ell,k):= 3{{\ell+2}\choose{2}}-{{k+\ell+2}\choose{2}}
\end{equation}

\end{lemma}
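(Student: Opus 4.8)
The plan is to express the system $\f q\cdot\f n\equiv0$ as the kernel of a single linear map, to convert the rank question into a Hilbert-function computation by homogenizing $\f n$, and then to treat the two implications separately according to whether base points are present. First I would set $R=\R[u,v]$ and write $R_{\le d}$ for the polynomials of degree at most $d$, so that $\dim R_{\le d}=\binom{d+2}{2}$. The admissible fields $\f q$ of degree $\le\ell$ are exactly the kernel of
\[
\phi_\ell\colon (R_{\le\ell})^3\longrightarrow R_{\le k+\ell},\qquad(q_1,q_2,q_3)\longmapsto q_1n_1+q_2n_2+q_3n_3,
\]
whose matrix has $\binom{k+\ell+2}{2}$ rows and $3\binom{\ell+2}{2}$ columns. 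Here full rank means that $\phi_\ell$ is surjective, and the rank--nullity theorem turns surjectivity into precisely the asserted count $\dim\ker\phi_\ell=3\binom{\ell+2}{2}-\binom{k+\ell+2}{2}=\Lambda(\ell,k)$. Everything therefore reduces to showing that $\phi_\ell$ is surjective if and only if $\f n$ is basepoint-free, which I would establish for $\ell$ in the range where the system is solvable, as in the paragraph preceding the lemma.

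Next I would homogenize: let $S=\R[u,v,w]$, let $N_i$ be the degree-$k$ homogenization of $n_i$, and put $I=(N_1,N_2,N_3)$. Because $\deg\f n=k$ and $\gcd(n_i)=1$ one checks that $\gcd(N_1,N_2,N_3)=1$, so the projective variety $V_+(I)\subseteq\mathbb{P}^2_{\C}$ is finite; moreover its points are exactly the base points of $\f n$, namely the finite common zeros over $\C$ together with the common roots of the leading forms $(n_i)_k$ at infinity. Padding a degree-$\le\ell$ relation with powers of $w$ to degree exactly $\ell$ identifies $\ker\phi_\ell$ with the degree-$(k+\ell)$ part of $\ker\bigl(S(-k)^3\xrightarrow{(N_1,N_2,N_3)}S\bigr)$. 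Reading the exact sequence $0\to\ker\to S(-k)^3\to I\to0$ in degree $k+\ell$ then yields
\[
\dim\ker\phi_\ell=\Lambda(\ell,k)+\dim(S/I)_{k+\ell},
\]
so $\phi_\ell$ has full rank exactly when $(S/I)_{k+\ell}=0$.

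The two directions now drop out. If $\f n$ has a base point, then $V_+(I)\neq\emptyset$ is a nonempty finite set, whence $S/I$ has dimension one; since $S/I$ is standard graded this forces $(S/I)_m\neq0$ for every $m\ge0$, and by the identity above $\dim\ker\phi_\ell>\Lambda(\ell,k)$, so $\phi_\ell$ is never full rank. (Concretely, evaluation at a finite base point, or vanishing of the top-degree part $\sum(q_i)_\ell(n_i)_k$ at an infinite one, confines the image to a proper subspace.) Conversely, if $\f n$ is basepoint-free then $V_+(I)=\emptyset$, so $N_1,N_2,N_3$ is a homogeneous system of parameters in the three-dimensional Cohen--Macaulay ring $S$, hence a regular sequence. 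Its Koszul resolution
\[
0\to S(-3k)\to S(-2k)^3\to S(-k)^3\xrightarrow{(N_1,N_2,N_3)}S\to S/I\to0
\]
shows that $S/I$ is Artinian with $(S/I)_m=0$ for $m>3(k-1)$; thus $(S/I)_{k+\ell}=0$ once $k+\ell>3(k-1)$, and there $\phi_\ell$ is full rank with $\dim\ker\phi_\ell=\Lambda(\ell,k)$.

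I expect the substance to lie entirely in the basepoint-free direction. The key input is the implication basepoint-free $\Rightarrow$ regular sequence, which rests on $S$ being Cohen--Macaulay so that a length-three system of parameters is automatically regular; only then is the Koszul complex exact and does it pin down the Hilbert function of $S/I$ exactly. The one delicate bookkeeping point is the threshold: surjectivity is equivalent to basepoint-freeness only once $k+\ell$ exceeds the socle degree $3(k-1)$ of $S/I$, which is exactly what the standing assumption that $\ell$ is large enough guarantees; for smaller $\ell$ the term $\dim(S/I)_{k+\ell}$ survives and $\dim\ker\phi_\ell$ strictly exceeds $\Lambda(\ell,k)$. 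This picture is consistent with Theorem~\ref{thm free module}: the affine syzygy module is free of rank two, and the homogenized Koszul description is simply its graded incarnation.
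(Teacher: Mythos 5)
Your proof is correct, and while it shares the paper's overall strategy (homogenize $\f n$ to $\f N$, pass to the ideal $I=\langle N_1,N_2,N_3\rangle$ in the polynomial ring in $u,v,w$, and count via Hilbert functions), the key technical inputs are genuinely different. The paper's Appendix quotes Lemma~\ref{lem cox schenck} (Cox--Schenck) to get exactness of the Koszul complex \eqref{eq koszul} except at $\bigoplus_{i=1}^3R(-k)$, and then derives the count as an inequality from the containment $\mathrm{im}(\delta_2)\subseteq\ker(\delta_1)$, with equality of these modules asserted to characterize basepoint-freeness. You instead work with the cokernel: your four-term sequence $0\to K\to S(-k)^3\to S\to S/I\to 0$ gives the exact identity $\dim\ker\phi_\ell=\Lambda(\ell,k)+\dim(S/I)_{k+\ell}$, reducing full rank to the vanishing of a single graded piece of $S/I$. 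This buys two things. First, the ``only if'' direction becomes airtight degree-by-degree: a base point makes $S/I$ positive-dimensional, hence $(S/I)_m\neq 0$ for every $m\geq 0$, so the rank is deficient for \emph{every} $\ell$; the paper's module-level containment argument only directly yields strictness in some degree. Second, your basepoint-free direction rests on the standard fact that a length-three system of parameters in the Cohen--Macaulay ring $S$ is a regular sequence, so the full Koszul resolution applies and $S/I$ is an Artinian complete intersection with socle degree $3(k-1)$; this pins down the precise threshold $\ell\geq 2k-2$ beyond which full rank holds. That threshold is not pedantry, and it is the one point where your argument is sharper than the paper's: for $\f N=(u^3,v^3,w^3)$ (basepoint-free, $k=3$) and $\ell=3$ the syzygies of degree at most $3$ are exactly the three Koszul ones, of dimension $3\neq\Lambda(3,3)=2$, so the system is not of full rank there; correspondingly, the binomial identity closing the Appendix, which converts Theorem~\ref{thm formule} into Lemma~\ref{thm jakkoliv}, is valid only when $\ell\geq 2k-2$, once the binomials in \eqref{eq sum of hilbert} are read combinatorially (as zero for small or negative upper entry). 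As a bonus, your correction term $\dim(S/I)_{k+\ell}$ gives a concrete algebraic interpretation of part of the quantity $\Delta$ in \eqref{Eq:odhad}, which the paper leaves unexplained.
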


\begin{proof}
Because of its technical nature the proof is postponed to Appendix.
\end{proof}

\begin{lemma}
The system of linear equations $q_v\equiv p_u$ has the full rank. Thus, the dimension of the~set of pairs $(q,r)$ of compatible polynomials of degree at most $\ell$ is equal to
\begin{equation}
\Omega(\ell) = {{\ell+3}\choose{2}}-1.
\end{equation}
\end{lemma}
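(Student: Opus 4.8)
The plan is to read the stated equality as a rank computation for the single linear map
\[
T\colon (q,r)\longmapsto \frac{\partial q}{\partial v}-\frac{\partial r}{\partial u},
\]
defined on the space of pairs of scalar polynomials $q,r\in\R[u,v]$ of total degree at most $\ell$. This domain has dimension $2\binom{\ell+2}{2}$, while every value $T(q,r)$ is a polynomial of degree at most $\ell-1$, so the image lies in a space of dimension $\binom{\ell+1}{2}$. The claim that the system $q_v\equiv r_u$ has full rank is exactly the claim that $T$ is surjective onto this target; once that is established, the rank--nullity theorem shows that the solution set (the kernel of $T$, i.e.\ the compatible pairs) has dimension $2\binom{\ell+2}{2}-\binom{\ell+1}{2}$, and it remains only to identify this with $\Omega(\ell)$.

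First I would prove surjectivity of $T$ by producing an explicit preimage. Given any target polynomial $s(u,v)$ of degree at most $\ell-1$, I set $r\equiv 0$ and take $q$ to be the term-by-term $v$-antiderivative of $s$, sending each monomial $u^iv^j$ to $u^iv^{j+1}/(j+1)$. This operation raises the $v$-degree, hence the total degree, of each monomial by exactly one, so $q$ has degree at most $\ell$ and lies in the domain; moreover $T(q,0)=q_v-0=s$. Thus every admissible $s$ is attained, $T$ is onto, and the system indeed has full rank $\binom{\ell+1}{2}$.

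It then remains only to verify the arithmetic $2\binom{\ell+2}{2}-\binom{\ell+1}{2}=\binom{\ell+3}{2}-1$, which follows from a one-line manipulation of binomial coefficients (both sides equal $\tfrac12(\ell+1)(\ell+4)$). I do not expect a genuine obstacle in this lemma: the only points requiring care are the degree bookkeeping that keeps the antiderivative within degree $\ell$, and the confirmation that the image of $T$ is confined to degree $\ell-1$, so that $\binom{\ell+1}{2}$ is the correct dimension of the target.

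Conceptually, and as a cross-check, I would also note the transparent description of $\ker T$ that underlies the reconstruction formula \eqref{Eq:integrace}. By the Poincaré lemma on the simply connected domain $\R^2$, a pair $(q,r)$ satisfies $q_v=r_u$ precisely when the $1$-form $q\,\mathrm{d}u+r\,\mathrm{d}v$ is exact, i.e.\ $q=x_u$ and $r=x_v$ for a potential $x$. Since $(q,r)$ has degree at most $\ell$ exactly when $x$ may be chosen of degree at most $\ell+1$, and $x$ is determined only up to an additive constant, the compatible pairs are parameterized by $\{x:\deg x\le\ell+1\}$ modulo constants. This yields the dimension $\binom{\ell+3}{2}-1=\Omega(\ell)$ directly, independently of the surjectivity argument, and confirms the count obtained above.
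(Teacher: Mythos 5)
Your proof is correct, and its primary argument is genuinely different from the paper's one-sentence proof. The paper argues in the opposite direction: it parameterizes the solution set directly, observing that the compatible pairs of degree at most $\ell$ are exactly the gradient pairs $(x_u,x_v)$ of a single polynomial $x$ of degree at most $\ell+1$, determined up to its constant term, which yields $\Omega(\ell)=\binom{\ell+3}{2}-1$ at once (the full-rank claim then follows only implicitly, by rank--nullity). You instead prove the full-rank claim head-on -- surjectivity of $T(q,r)=q_v-r_u$ onto the polynomials of degree at most $\ell-1$, via the explicit antiderivative preimage $(q,0)$ -- and then recover the kernel dimension from rank--nullity together with the identity $2\binom{\ell+2}{2}-\binom{\ell+1}{2}=\binom{\ell+3}{2}-1$, which you verify correctly (both sides equal $\tfrac12(\ell+1)(\ell+4)$). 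Your route has the merit of establishing the lemma's first assertion (independence of the equations) explicitly, which the paper never isolates; the paper's route is more economical and produces exactly the parameterization of the solution set that is used downstream in the integration formula \eqref{Eq:integrace}. Your closing cross-check paragraph is essentially the paper's proof -- just make sure to note that the potential is obtained by explicit integration of the closed polynomial $1$-form (as in \eqref{Eq:integrace}), so that it is itself polynomial of degree at most $\ell+1$, rather than merely invoking the smooth Poincar\'e lemma. A cosmetic point: the paper's statement contains a typo ($q_v\equiv p_u$ should read $q_v\equiv r_u$), which you silently and correctly repaired.
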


\begin{proof}
This problem can be directly transformed to computing the non-absolute coefficients of a polynomial of degree $\ell+1$ since after the computation of its partial derivatives one immediately obtains pairs of compatible polynomials of degree $\ell$.
\end{proof}

To sum up, if we have prescribed a polynomial normal vector field $\f n(u,v)$ of degree $k$ then the family of polynomial parameterized surfaces $\f x(u,v)$ of degree $\ell + 1$ (up to translation) with  $\f n(u,v)$ as its normal vector field has the dimension
\begin{equation}\label{Eq:odhad}
2 \Lambda(\ell,k) + 3 \Omega(\ell) - 6 {{\ell+2}\choose{2}} + \Delta,
\end{equation}
where $\Delta \geq 0$ represents a correction responsible for the quality of the normal vector field. For instance $\Delta > 0$ for $\f n(u,v)$ possessing base points (which is a typical property of parameterizations of sphere-like surfaces, used in this paper).

Let us emphasize a main importance of \eqref{Eq:odhad} for practical applications studied in this paper. Although $\Delta$ is generally difficult to compute, \eqref{Eq:odhad} immediately reveals a clear effect, i.e., we can easily find to any prescribed $k$ the upper bound for $\ell$. Moreover, as non-standard vector fields $\f n(u,v)$ lead to solutions with more free parameters, we are usually able (especially for Pythagorean normal vector fields) to construct parameterizations $\f x(u,v)$ of lower degree than the computed upper bound, see Example~\ref{Exmp:PNexmp1}.


\smallskip
The previous observations and equations \eqref{eq PN soustava} will be used later for formulating an algorithm for the Hermite interpolation by piecewise polynomial PN surfaces.

\begin{example}\label{Exmp:PNexmp1}\rm
Consider the normal vector field $\f n(u,v)=(2u,2v,u^2+v^2-1)^\top$. Using \eqref{Eq:odhad} we have guaranteed that linear equations \eqref{eq PN soustava} possess a solution for $\ell \geq 3$. However Pythagorean normal vector field $\f n(u,v)$ has base points and thus we obtain a 3-parametric solution already for quadratic polynomials \eqref{PN_PuPv}. In particular, we arrive at the following family of PN surfaces (up to translation)
\begin{multline}
\f x(u,v) = \left(\frac{1}{3} \lambda _1 u \left(u^2+3 v^2-3\right)-\frac{1}{3} \lambda _2 v \left(-3 u^2+v^2+3\right)+\lambda _3 u \left(-u^2+v^2+3\right), \right. \\
\left. \frac{2}{3} \lambda _1 v \left(v^2-3\right)-\frac{1}{3} u \left(\lambda _2 \left(u^2-3 v^2+3\right)+6 \lambda _3 u v\right),  u \left(3 \lambda _3 u-2 \lambda _2 v\right)-\lambda _1 \left(u^2+2 v^2\right)\right)^\top
\end{multline}
with the area element equal to
\begin{equation}
\sigma(u,v)^2 = f(u,v)^2 (u^2+v^2+1)^2,
\end{equation}
where
\begin{equation}
f(u,v)^2 = \left[ -6 \lambda _3^2 u^2+2 \lambda _1 \left(\lambda _3 \left(u^2-v^2-3\right)-2 \lambda _2 u v\right)-\lambda _2^2 \left(u^2+v^2+1\right)+4 \lambda _2 \lambda _3 u v-2 \lambda _1^2 \left(v^2-1\right) \right]^2.
\end{equation}
This also confirms the result from paper \citep{LaVr11} in which the polynomial cubic surfaces were thoroughly investigated and the same three generating surfaces were found.
\end{example}

\begin{remark}\rm
In addition, we would like to stress that a non-constant factor $f(u,v)$ indicates the existence of a curve on the surface $\f x(u,v)$ where the normal field vanishes. In Fig.~\ref{Fig:singular}, the cubic PN surface from Example~\ref{Exmp:PNexmp1} (for chosen values $\lambda_1 = \lambda_3=1$ and $\lambda_2 = 1/10$) with such a curve is shown.
\end{remark}

\begin{figure}[tb]
\begin{center}
\psfrag{X}{$\f{x}(u,v)$}
\psfrag{0}{$0$}
\psfrag{1}{$1$}
\psfrag{u}{$u$}
\psfrag{v}{$v$}
\psfrag{u0}{$v=-2$}
\psfrag{v1}{$u=-2$}
\psfrag{u2}{$v=2$}
\psfrag{v2}{$u=2$}
\psfrag{f}{$f(u,v)=0$}
\psfrag{g}{}
\includegraphics[width=0.85\textwidth]{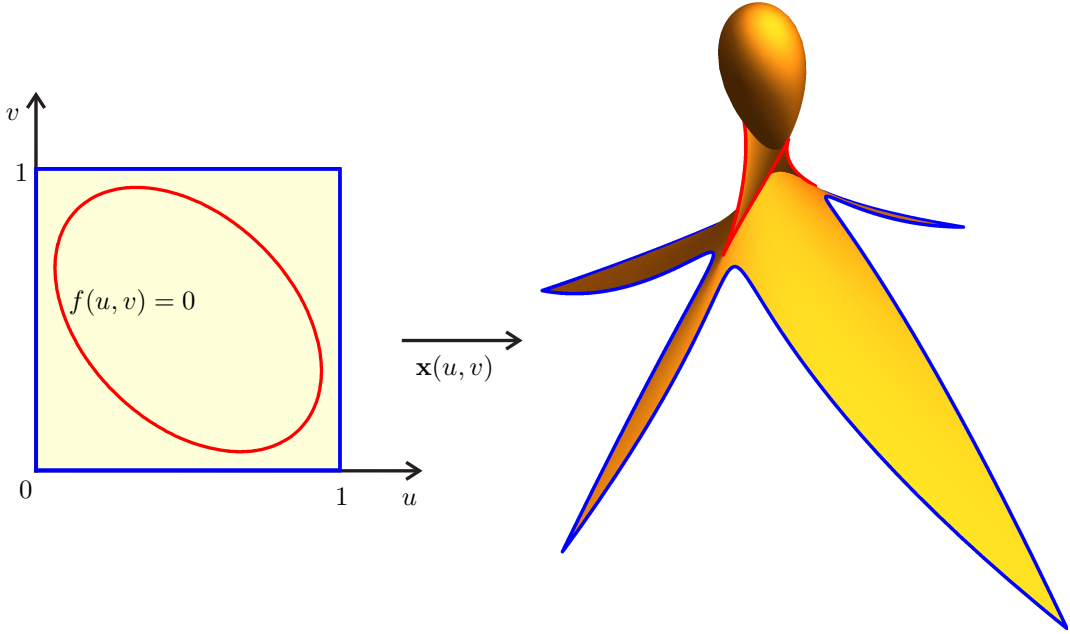}
\begin{minipage}{0.9\textwidth}
\caption{Left: Parametric domain with the ellipse (red) given by the factor $f(u,v)=0$. Right: The cubic surface $\f x(u,v)$ with the curve (red) on it corresponding to $f(u,v)=0$ at which points the normal vector field vanishes.}\label{Fig:singular}
\end{minipage}
\end{center}
\end{figure}

\subsection{Polynomial MOS surfaces in $\R^{3,1}$}\label{poly MOS}

MOS surfaces, i.e., \underline{M}edial surfaces \underline{O}beying the \underline{S}um of squares condition, were introduced by \cite{KoJu07} as a surface analogy of MPH curves in four-dimensional Minkowski space $\mathbb{R}^{3,1}$. The distinguishing property of MOS surfaces is that if considered as an MST (medial surface transform) of a spatial domain, the associated envelope and its offsets admit exact rational parameterization.

For the sake of brevity, we recall at least an expression of the envelope associated to a medial surface transform $\f{x}(u,v)=(x,y,z,r)^{\top}(u,v)$ in $\R^{3,1}$. If we denote by $\hat{\f x}(u,v) = (x,y,z)^{\top}(u,v)$ the corresponding medial surface in $\R^{3}$ then the closed-form {\em envelope formula} has the form
\begin{equation}\label{obalka_MST2}
\f{b}^{\pm}(u,v)=\hat{\f x}(u,v)-r \f n ^{\pm}(u,v),
\end{equation}
where
\begin{equation}\label{cond_MOS}
\begin{array}{rcl}
\f n^{\pm} & = &
\displaystyle \frac{1}{{\hat{E}}{\hat{G}}-{{\hat{F}}}^2}
\left[\left(\frac{\partial r}{\partial u}{\hat{G}}-\frac{\partial r}{\partial v}{\hat{F}}\right){\hat{\f x}}_u+\left(\frac{\partial r}{\partial v}{\hat{E}}-\frac{\partial r}{\partial u}{\hat{F}}\right){\hat{\f x}}_v\mp\sqrt{{E}{G}-{F}^2}({{\hat{\f x}}_u}\times {{\hat{\f x}}_v})\right],
\end{array}
\end{equation}
where $\f n^{\pm}$ is a unit vector perpendicular to $\f b^{\pm}$. The components ${E},{F},{G}$ of the first fundamental form of ${\f x}(u,v)$ are computed using the indefinite Minkowski inner product with the signature $(3,\! 1)$, whereas the components $\hat{E},\hat{F},\hat{G}$ of the first fundamental form of $\hat{\f x}(u,v)$ are determined using the standard Euclidean inner product in $\R^3$. Then MOS surfaces are rational surfaces characterized by the condition
\begin{equation}\label{eq:MOScondition}
{E}{G}-{F}^2=\sigma^2(u,v),\quad \mbox{ where }\,  \sigma(u,v)\in\R(u,v),
\end{equation}
that guarantees the rationality of \eqref{cond_MOS} and thus of the envelope $\f{b}^{\pm}(u,v)$. From this it is evident that MOS surfaces are simultaneously surfaces with rational area element in $\R^{3,1}$.

If points in the projective closure of $\R^{3,1}$ are described using the standard homogeneous coordinates $(x_0:x_1:x_2:x_3:x_4)$ then the equation $x_0=0$ describes the {\em ideal hyperplane} as the set of all asymptotic directions, i.e., of points at infinity. The subset of the ideal hyperplane which is invariant with respect to transformations maintaining Minkowski inner product (i.e., {\em Lorentz transforms}\/) is called the {\em absolute quadric} $\Sigma$ and characterized by
\begin{equation}\label{abskvadr}
\Sigma:\,\, x_1^2+x_2^2+x_3^2-x_4^2 = x_0=0.
\end{equation}

Now, consider in $\R^{3,1}$ a surface given by the parametrization $\f x(u,v)$. At regular points (i.e., where the vectors $\f x_u, \f x_v$ are linearly independent), the {\em normal vectors} of $\f x$ (vectors orthogonal to the tangent 2-plane $\tau(u,v)$ with respect to Minkowski inner product, cf.~Fig.~\ref{norm_bivec}) satisfy the two linear equations
\begin{equation}\label{eq:88}
\begin{array}{c}
\langle \f n,\f x_u\rangle \equiv 0,\\
\langle \f n,\f x_v\rangle \equiv 0.
\end{array}
\end{equation}
Among them, the {\em isotropic normal vectors} are described by
\begin{equation}\label{eq:95}
\langle\f n,\f n \rangle \equiv 0.
\end{equation}
As shown in \citep{BaJuKoLa10}, these isotropic normal vectors of $\f x(u,v)$ have the form \eqref{cond_MOS} and play a significant role in the {\em envelope formula} \eqref{obalka_MST2}.

The isotropic normals $\f n^{\pm}$ can be identified with points of the oval quadric \eqref{abskvadr} considered as the unit sphere in $\R^3$.
For each point $\f x(u,v)$ we obtain two isotropic normal vectors $\f{n}^{\pm}$, which correspond to two points on $\Sigma$ obtained as intersection of the line conjugated with the ideal line of $\tau(u,v)$ with respect to $\Sigma$. The set of these points forms two components $\cal G^{\pm}$, which is usually called the {\em isotropic Gauss image} of $\f x (u,v)$, cf.~\citep{BaJuKoLa10}

\begin{figure}[tb]
\begin{center}
{\psfrag{A}{ }
\psfrag{a}{$\f{x}(u,v)$}
\psfrag{au}{$\f{x}_u$}
\psfrag{av}{$\f{x}_v$}
\psfrag{T}{$\tau(u,v)$}
\psfrag{ilt}{}
\psfrag{n1}{$\f{n}_1$}
\psfrag{n2}{$\f{n}_2$}
\psfrag{pi}{$\pi_{\infty}:x_0=0$}
\includegraphics[width=10cm]{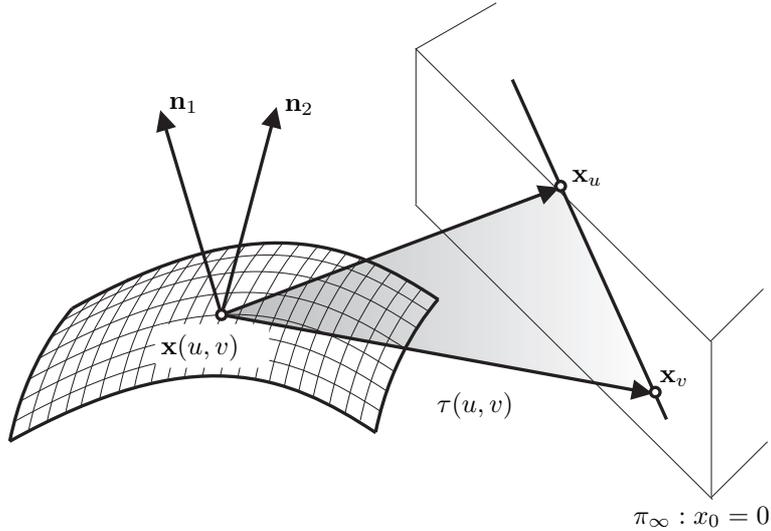}}
\begin{minipage}{0.9\textwidth}
\caption{Normal plane $\nu=\mathrm{span}\{\f{n}_1,\f{n}_2\}$ at $\f x(u,v)$ and the ideal line of the corresponding tangent 2-plane $\tau(u,v)$.}\label{norm_bivec}
\end{minipage}
\end{center}
\end{figure}

\medskip
To find a method for deriving parameterizations of polynomial MOS surfaces, later applicable for Hermite interpolation, we use the approach that worked before for PN (hyper)surfaces in $\R^3$. Firstly, we again recall Lemma~\ref{gramiany} and reformulate the statement for 2-surfaces in 4-dimensional space.
Let be given a polynomial parameterized surface $\mathbf x(u,v)$ in 4-dimensional space. Consider the tangent space $V(u,v)=\mathrm{span}\{\f x_u(u,v),\f x_v(u,v)\}$ and the normal space $V^\perp(u,v)=\mathrm{span}\{\f n_1(u,v),\f n_2(u,v)\}$. Then it holds
\begin{equation}
\Gamma(\f x_u,\f x_v)=f^2\, \Gamma(\f n_1,\f n_2),
\end{equation}
where $f(u,v)\in\R(u,v)$ is a non-zero factor.

This means that it is again possible to start with suitable normal vectors when constructing parameterizations of polynomial  MOS surfaces as the condition on the polynomiality of the area element depends on $\Gamma(\f n_1,\f n_2)$. Clearly, when at least one of the normal vectors $\f n_1$, or $\f n_2$ is isotropic, i.e., its squared norm is zero, then  $\Gamma(\f n_1,\f n_2)$ is automatically a perfect square.

Therefore after a slight modification we can use the main ideas from the approach discussed in the previous section. We start with the normal space
$\mathrm{span}\{\f n^+(u,v),\f n^-(u,v)\}$ given by the polynomial isotropic vectors of degree $k$, i.e., $\langle \f n^{\pm},\f n^{\pm} \rangle  \equiv  0$. Their parameterizations can be again obtained from polynomial Pythagorean quadruples, cf. \citep{DiHoJu93}.

To determine an associated polynomial MOS parameterization of degree $\ell +1$, we are supposed to find suitable polynomial vector fields
\begin{equation}\label{MOS_PuPv}
\begin{array}{c}
\displaystyle
\f q(u,v)=
\left(
\sum_{i+j\leq\ell}
\mbox{\hspace*{-0ex}}q_{1ij}{u^iv^j},
\sum_{i+j\leq\ell}
\mbox{\hspace*{-0ex}}q_{2ij}{u^iv^j},
\sum_{i+j\leq\ell}
\mbox{\hspace*{-0ex}}q_{3ij}{u^iv^j},
\sum_{i+j\leq\ell}
\mbox{\hspace*{-0ex}}q_{4ij}{u^iv^j}
\right)^{\top},
\\[4ex]
\displaystyle
\f r(u,v)=
\left(
\sum_{i+j\leq\ell}
\mbox{\hspace*{-0ex}}r_{1ij}{u^iv^j},
\sum_{i+j\leq\ell}
\mbox{\hspace*{-0ex}}r_{2ij}{u^iv^j},
\sum_{i+j\leq\ell}
\mbox{\hspace*{-0ex}}r_{3ij}{u^iv^j}
\sum_{i+j\leq\ell}
\mbox{\hspace*{-0ex}}r_{4ij}{u^iv^j}
\right)^{\top},
\end{array}
\end{equation}
which will play the role of $\f x_u$, $\f x_v$, respectively. Thus, $\f q$, $\f r$ must satisfy the following conditions
\begin{equation}\label{eq MOS soustava}
\begin{array}{rcl}
\langle\f q , \f n^{\pm} \rangle& \equiv & 0,\\
\langle\f r, \f n^{\pm} \rangle& \equiv & 0,\\[1ex]
\displaystyle \frac{\partial\f q}{\partial v} - \displaystyle \frac{\partial\f r}{\partial u} & \equiv & 0,
\end{array}
\end{equation}
where the third equation expresses the condition for the~integrability.
For $\ell$ large enough, system of linear equations \eqref{eq MOS soustava} with unknowns $q_{1ij}, q_{2ij}, q_{3ij}, q_{4ij}, r_{1ij},r_{2ij},r_{3ij}, r_{4ij}$ is solvable
and  we arrive at the corresponding MOS parameterization
\begin{equation}
\f x(u,v)=\int\f {q}(u,v)\,\mathrm{d}u+\f {c}(v),
\mbox{ where}
\qquad
\f{c}(v)=\left[\int \f{r}(u,v)\,\mathrm{d}v-\int \f{q}(u,v)\,\mathrm{d}u\right]_{u=0},
\end{equation}
for which it holds $EG-F^2=f(u,v)^2\Gamma(\f n^+,\f n^-)$, where $f(u,v)$ is a factor balancing suitably the degrees of $\f n^{\pm}$ and $\f x$.
Hence, we can formulate

\begin{proposition}\label{MOS_howto}
Given in $\R^{3,1}$ isotropic polynomial vector fields $\f n^+(u,v)$ and $\f n^-(u,v)$. Then there exists a polynomial MOS surface, i.e., a polynomial surface with polynomial surface area element, possessing $\mathrm{span}\{\f n^+(u,v),\f n^-(u,v)\}$ as its normal space.
\end{proposition}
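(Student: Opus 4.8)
The plan is to run, in the four-dimensional Minkowski setting, exactly the scheme already set up for Proposition~\ref{PN_howto}, replacing the single normal $\f n$ of the $\R^3$ case by the normal plane $\mathrm{span}\{\f n^+,\f n^-\}$. First I would dispose of the area-element requirement, which comes for free. By the reformulation of Lemma~\ref{gramiany} recorded just above, any polynomial surface $\f x(u,v)$ whose normal space equals $\mathrm{span}\{\f n^+,\f n^-\}$ satisfies $EG-F^2=\Gamma(\f x_u,\f x_v)=f^2\,\Gamma(\f n^+,\f n^-)$ for a rational factor $f$. Since both fields are isotropic, $\langle\f n^\pm,\f n^\pm\rangle\equiv0$, the $2\times2$ Gram determinant collapses to $\Gamma(\f n^+,\f n^-)=-\langle\f n^+,\f n^-\rangle^2$, a perfect square up to sign. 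Hence $EG-F^2$ equals, up to sign, the square of the polynomial $f\,\langle\f n^+,\f n^-\rangle$, so $\sqrt{|EG-F^2|}$ is polynomial and any such surface automatically has polynomial area element. This reduces the proposition to exhibiting one genuine polynomial surface with the prescribed normal space.

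Second, I would translate that existence into the solvability of system~\eqref{eq MOS soustava}. A polynomial surface has $\mathrm{span}\{\f n^+,\f n^-\}$ as its normal space iff its partials $\f x_u,\f x_v$ are orthogonal to both $\f n^\pm$ and linearly independent. Taking generic polynomial fields $\f q,\f r$ of degree at most $\ell$ as in~\eqref{MOS_PuPv} for the unknown partials, the orthogonality conditions $\langle\f q,\f n^\pm\rangle\equiv\langle\f r,\f n^\pm\rangle\equiv0$ and the integrability condition $\f q_v-\f r_u\equiv0$ are precisely~\eqref{eq MOS soustava}. Once a solution with $\f q,\f r$ independent is in hand, the quadrature displayed after~\eqref{eq MOS soustava} recovers $\f x$, and the surface inherits the required normal space. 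So everything rests on showing that~\eqref{eq MOS soustava} has a non-degenerate solution for $\ell$ large enough.

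Third, I would attack that solvability structurally. The polynomial fields orthogonal, with respect to the Minkowski product, to both $\f n^+$ and $\f n^-$ form a module over $\R[u,v]$---the Minkowski analogue of $\mathrm{Syz}(\f n)$---which by the four-polynomials-in-two-variables machinery of \citep{CheCoLi05} used for Theorem~\ref{thm free module} is free of rank two. Fixing a basis resolves the orthogonality part of~\eqref{eq MOS soustava} identically, and the task reduces to locating an \emph{integrable} pair inside this rank-two module. I would then count dimensions as in~\eqref{Eq:odhad}: the orthogonal pairs and the integrable pairs are subspaces of the coefficient space, and their intersection is bounded below by the sum of their dimensions minus that of the ambient space, plus a nonnegative correction $\Delta$ encoding base points of $\f n^\pm$ and further syzygetic dependencies.

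The hard part will be exactly this final estimate. The bare inclusion--exclusion bound need not be positive for higher-degree normals, so the non-trivial and simultaneously non-degenerate integrable pair must be forced by the dependencies hidden in $\Delta$ rather than by a naive count. To secure existence independently of that count, I would fall back on a known rational model: a rational MOS surface with the prescribed isotropic Gauss image exists by the dual/envelope construction of \citep{BaJuKoLa10}; its partials are rational fields orthogonal to $\f n^\pm$, and clearing denominators together with matching the mixed partials inside the free rank-two module yields an integrable polynomial pair of sufficiently high degree. Either way, for $\ell$ large system~\eqref{eq MOS soustava} admits a non-degenerate solution, and integrating it produces the desired polynomial MOS surface.
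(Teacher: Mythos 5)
Your first two paragraphs are precisely the paper's own proof, which is all the paper offers for this proposition: the reformulation of Lemma~\ref{gramiany} for 2-surfaces in $\R^{3,1}$ gives $\Gamma(\f x_u,\f x_v)=f^2\,\Gamma(\f n^+,\f n^-)$, isotropy collapses $\Gamma(\f n^+,\f n^-)$ to $-\langle\f n^+,\f n^-\rangle^2$ (the sign being absorbed by the constant $(-1)^{q\bmod 2}$ in Lemma~\ref{gramiany}), and the surface itself is produced by solving \eqref{eq MOS soustava} for $\ell$ large enough and integrating. Be aware, though, that the paper never \emph{proves} solvability of \eqref{eq MOS soustava}; it is asserted with the words ``for $\ell$ large enough,'' and the supporting syzygy/Hilbert-function analysis (Theorem~\ref{thm free module}, Lemma~\ref{thm jakkoliv}, \eqref{Eq:odhad}, and the Appendix) is carried out only for the PN case in $\R^3$. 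So where your proposal and the paper overlap, they coincide.

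Your attempt to go further and actually establish solvability is where the problems lie, and you diagnose the first one yourself: the naive count fails. Concretely, the MOS analogue of \eqref{Eq:odhad} is $2\Lambda'+4\Omega(\ell)-8\binom{\ell+2}{2}+\Delta$ with $\Lambda'=4\binom{\ell+2}{2}-2\binom{k+\ell+2}{2}$, and with $\Delta=0$ this equals $(4-4k)\ell+8-2(k+1)(k+2)$, which is negative for \emph{every} $k\ge1$ once $\ell$ is large; the entire burden thus falls on $\Delta>0$, i.e., on dependencies you do not establish. Your fallback does not rescue this, for two reasons. First, its premise is unsupported: \citep{BaJuKoLa10} does not supply a rational MOS surface with prescribed isotropic Gauss image; the dual/envelope construction \eqref{PN_param} applies to PN \emph{hypersurfaces}, and the paper states at the opening of Section~3 that this approach ``cannot be applied inherently'' to MOS surfaces, which are 2-surfaces in 4-space. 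Second, even granted such a rational surface with $\f x_u=\f a/d$, $\f x_v=\f b/d$, clearing denominators destroys integrability ($\f a_v\ne\f b_u$ in general), and ``matching the mixed partials inside the free rank-two module'' is not a construction but a restatement of the unsolved problem, namely finding an integrable pair in that module. A minor further point: freeness of the rank-two module $\{\f q\in\R^4[u,v]\,:\,\langle\f q,\f n^\pm\rangle\equiv0\}$ does not follow from the machinery of \citep{CheCoLi05}, which concerns the kernel of $R^4\to R$ (rank three); it needs a separate argument (second syzygy $\Rightarrow$ reflexive $\Rightarrow$ projective $\Rightarrow$ free by Quillen--Suslin). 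In short, your write-up is as complete as the paper's treatment, but the crux you rightly isolate --- nondegenerate solvability of \eqref{eq MOS soustava} for general isotropic $\f n^\pm$ --- is closed neither by your additional arguments nor, in fairness, by the paper itself.
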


\begin{remark}\rm
Obviously, for generating arbitrary MOS parameterizations it is sufficient when only one of the vectors $\f n_1, \f n_2$ is isotropic. This guarantees  that $\Gamma(\f n_1,\f n_2)$ is a perfect square.  However, for interpolation purposes it is then necessary to ensure the continuity conditions in the other way, cf. Section~\ref{Se:MOS_interpolation}.
\end{remark}

\begin{example}\rm
Consider the isotropic normal vector field $\f n(u,v)=(2u,2v,u^2+v^2-1,u^2+v^2+1)^\top$. Solving \eqref{eq MOS soustava} for linear \eqref{MOS_PuPv} yields $5$-parametric family of quadratic MOS surfaces with the parametric description (up to translation)
\begin{multline}
\f x(u,v) = \left( \frac{1}{2} \lambda _4 \left(v^2-u^2\right)+\lambda _5 u+\lambda _3 u v+\lambda _2 v,-\frac{1}{2} \lambda _3 u^2+\lambda _2 u-\lambda_4 u v+\frac{\lambda _3 v^2}{2}+\lambda _1 v, \right. \\
\left. \frac{1}{2} \left(\lambda _5 u^2+2 \lambda _4 u+2 \lambda _2 u v+\lambda _1 v^2-2 \lambda _3 v\right),-\frac{1}{2} \lambda _5 u^2+\lambda _4 u-\lambda _2 u v-\frac{\lambda _1 v^2}{2}-\lambda _3 v\right)^{\top}.
\end{multline}
And
\begin{equation}
\sigma^2(u,v) = \left[ -\lambda _2^2-\left(\lambda _3^2+\lambda _4^2\right) \left(u^2+v^2\right)+2 \lambda _2 \left(\lambda _3 u-\lambda _4 v\right)-\lambda _5
   \left(\lambda _4 u+\lambda _3 v\right)+\lambda _1 \left(\lambda _5+\lambda _4 u+\lambda _3 v\right) \right]^2.
\end{equation}

\end{example}

\section{Smooth surface interpolation using polynomial patches with polynomial area element}\label{interpol}

In this section we will show how the ideas and results from the previous sections can be directly applied to a practical problem of Hermite interpolation by piecewise polynomial surfaces with polynomial area element. Mainly we will discuss a method for smooth surface interpolation using polynomial patches with rational offsets. Then we sketch in short an analogous approach also for polynomial medial surface transforms yielding rational envelopes.

\subsection{Hermite interpolation by piecewise polynomial surfaces with rational offsets}\label{Se:PN_interpolation}

In what follows we present a {\em direct} method for interpolating given network of position data (points) and first order data (normals) by piecewise {\em polynomial} surfaces with rational offsets (Pythagorean normal surfaces). We start with the construction of one quadrilateral/triangular patch interpolating prescribed corner points and normals, and consequently the approach will be extended also for  $m \times n$ points arranged in a rectangular grid (for more details about quadrilateral mesh generation and processing see  e.g. \cite{BoLePiPuSiTaZo13} and also for smoothly joined triangular patches interpolating triangular meshes, cf. \cite{Fa86}.

\medskip
Consider four points $\f p_{ij}$, $i,j=0,1$, and four associated tangent planes $\tau_{ij}$ determined by the unit normal vectors $\f N_{ij}$ (for quadrilateral patches); or
three points $\f p_{ij}$, $i,j=0,1$ and $i+j<2$, and three associated tangent planes $\tau_{ij}$ determined by the unit normal vectors $\f N_{ij}$ (for triangular patches). Following the ideas presented in the previous sections, we can see that the whole algorithm consists of two subparts: (i) first, a suitable normal vector field $\f n(u,v)$ interpolating data $\f n_{ij}=\lambda_{ij} \f N_{ij}$, $\lambda_{ij}  \in \R$, and having the polynomial norm (i.e., satisfying the Pythagorean property) must be constructed; (ii)~next, a polynomial patch interpolating the points $\f p_{ij}$ and possessing normal vector field $\f n(u,v)$ (which guarantees the PN property) is computed.

\smallskip
As concerns {\sc Part (i)}, any method for interpolating data  $\f N_{ij}$ by a (quadrilateral/triangular) rational patch $\f N(u,v)$ on the unit sphere $\mathcal{S}^2$ can be utilized, see e.g. \citep{AlNeSchu96}. For the sake of completeness and to show the functionality and the simplicity of the designed algorithm, we recall one standard method based on using the stereographic projection. Nonetheless, one significant limitation of this approach should be noted -- the points $\f N_{ij}$ on the unit sphere $\mathcal{S}^2$ must be suitably distributed and the Gauss image of the interpolating surface cannot contain the chosen center of the stereographic projection. This means that in some cases a preliminary coordinate transformation is needed.

So, we choose a suitable center  of the stereographic projection $\pi$ (preferably on the opposite hemisphere; see the limitations mentioned above) and project data $\f N_{ij} \in \mathcal{S}^2$ to the plane $\R^2$. Then, we construct a suitable rational patch in $\R^2$ interpolating $\pi(\f N_{ij})$. For instance, in the {\em quadrilateral case} one can consider the bilinear patch
\begin{equation}\label{BL_patch}
\widehat{\f N}(u,v) = \pi(\f N_{00}) \, (1 - u) (1 - v) + \pi(\f N_{10}) \, u (1 - v) +  \pi(\f N_{11}) \, u v +   \pi(\f N_{01}) \, (1 - u) v, \quad u,v \in [0,1];
\end{equation}
or in the {\em triangular case} one can consider the linear patch
\begin{equation}\label{L_patch}
\widehat{\f N}(u,v) = \pi(\f N_{10}) \,u + \pi(\f N_{01}) \, v +  \pi(\f N_{00}) \, (1-u-v)  \quad u \in [0,1],\, v \in [0,1-u].
\end{equation}

The inverse stereographic projection $\pi^{-1}$ yields a rational patch $\f N(u,v)$ on $\mathcal{S}^2$. In addition, as we are interested not in rational but in polynomial normal vector field $\f n(u,v)$ we can omit the least common denominator and consider only numerators of the parameterization. Thus we arrive at a polynomial parameterization $\f n(u,v)$ of a sphere-like surface, see \citep{AlNeSchu96}, fulfilling the Pythagorean property and moreover satisfying the prescribed interpolation conditions
\begin{equation}
\f n(i,j) = \lambda_{ij} \, \f N_{ij}, \quad \lambda_{ij} \in \R.
\end{equation}

\smallskip
Once we have a suitable polynomial vector field $\f n(u,v)$ of degree $k$ we can continue with {\sc Part (ii)} of the algorithm. Our goal is to find a polynomial patch $\f x(u,v)$ of prescribed degree $\ell+1$ possessing $\f n(u,v)$ as its associated normal vector field and interpolating given position data, i.e., it must hold
\begin{equation}\label{Eq:PNinterpolace_rov1}
\f x_u \cdot \f n \equiv 0, \quad \f x_v \cdot \f n \equiv 0,
\end{equation}
and
\begin{equation}\label{Eq:PNinterpolace_rov2}
\f x(i,j) =  \f p_{ij}.
\end{equation}
Thus for further computations, we prescribe a polynomial surface
\begin{equation}\label{Eq:PNinterpolace_surface}
\f x(u,v)=
\left(
\sum_{i+j\leq\ell+1} x_{1ij}{u^iv^j},
\sum_{i+j\leq\ell+1} x_{2ij}{u^iv^j},
\sum_{i+j\leq\ell+1} x_{3ij}{u^iv^j}
\right)^{\top}
\end{equation}
with its coefficients taken as free parameters to be determined by the above constraints.

Clearly differentiating \eqref{Eq:PNinterpolace_surface} with respect to $u$, $v$ we arrive at $\f q(u,v)$, $\f r(u,v)$, respectively, see \eqref{PN_PuPv}.
Let us emphasize that starting with one polynomial parameterization \eqref{Eq:PNinterpolace_surface} instead of two in \eqref{PN_PuPv} and computing partial derivatives of \eqref{Eq:PNinterpolace_surface} instead of integrating \eqref{Eq:integrace} is more appropriate for the purpose of interpolation as one does not have to take care of the compatibility condition. Moreover, the number of the resulting linear equations is significantly lower. On the other hand, for gaining the theoretical results as e.g. for the estimation of degree of the resulting surface, prescribing two independent parameterizations $\f q(u,v)$, $\f r(u,v)$ from \eqref{PN_PuPv} was more convenient.

To conclude the method, expressions \eqref{Eq:PNinterpolace_rov1} and \eqref{Eq:PNinterpolace_rov2} depend linearly on coefficients $x_{1ij},x_{2ij},x_{3ij}$ of $\f x(u,v)$ and therefore can be rewritten as a system of linear equations, which is easy to solve. Solving the equations from systems \eqref{Eq:PNinterpolace_rov1} and \eqref{Eq:PNinterpolace_rov2} yields a polynomial PN patch interpolating the points $\f p_{ij}$ and touching the planes $\tau_{ij}$ at these points.

%

\begin{figure}[tb]
\begin{center}
\psfrag{x}{{\color{black}$\f{x}(u,v)$}}
\psfrag{p00}{$\f p_{00}$}
\psfrag{p10}{$\f p_{10}$}
\psfrag{p22}{$\f p_{11}$}
\psfrag{p01}{$\f p_{01}$}
\psfrag{n00}{$\f N_{00}$}
\psfrag{n10}{$\f N_{10}$}
\psfrag{n22}{$\f N_{11}$}
\psfrag{n01}{$\f N_{01}$}
\includegraphics[width=0.58\textwidth]{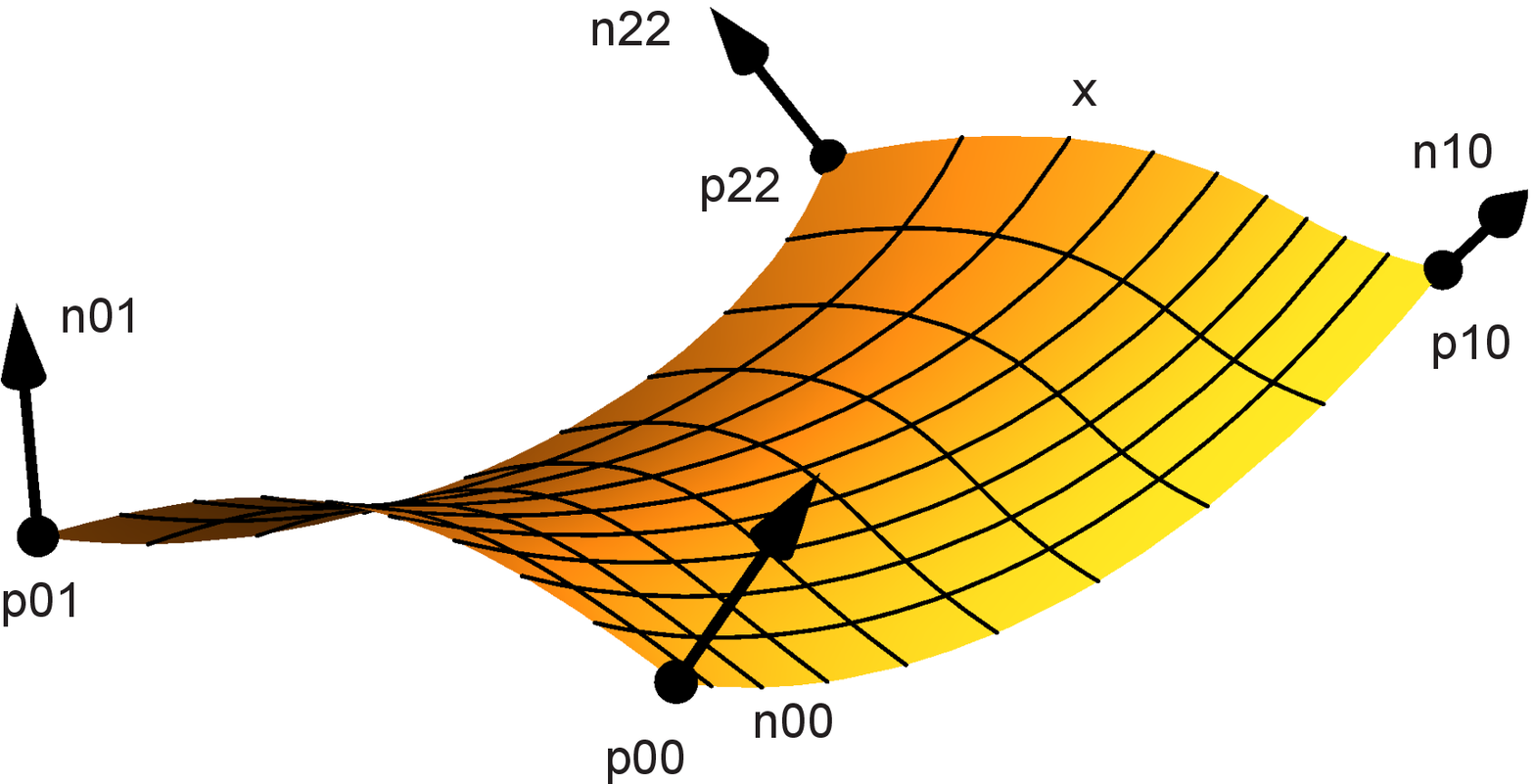}
\hspace{3ex}
\includegraphics[width=0.37\textwidth]{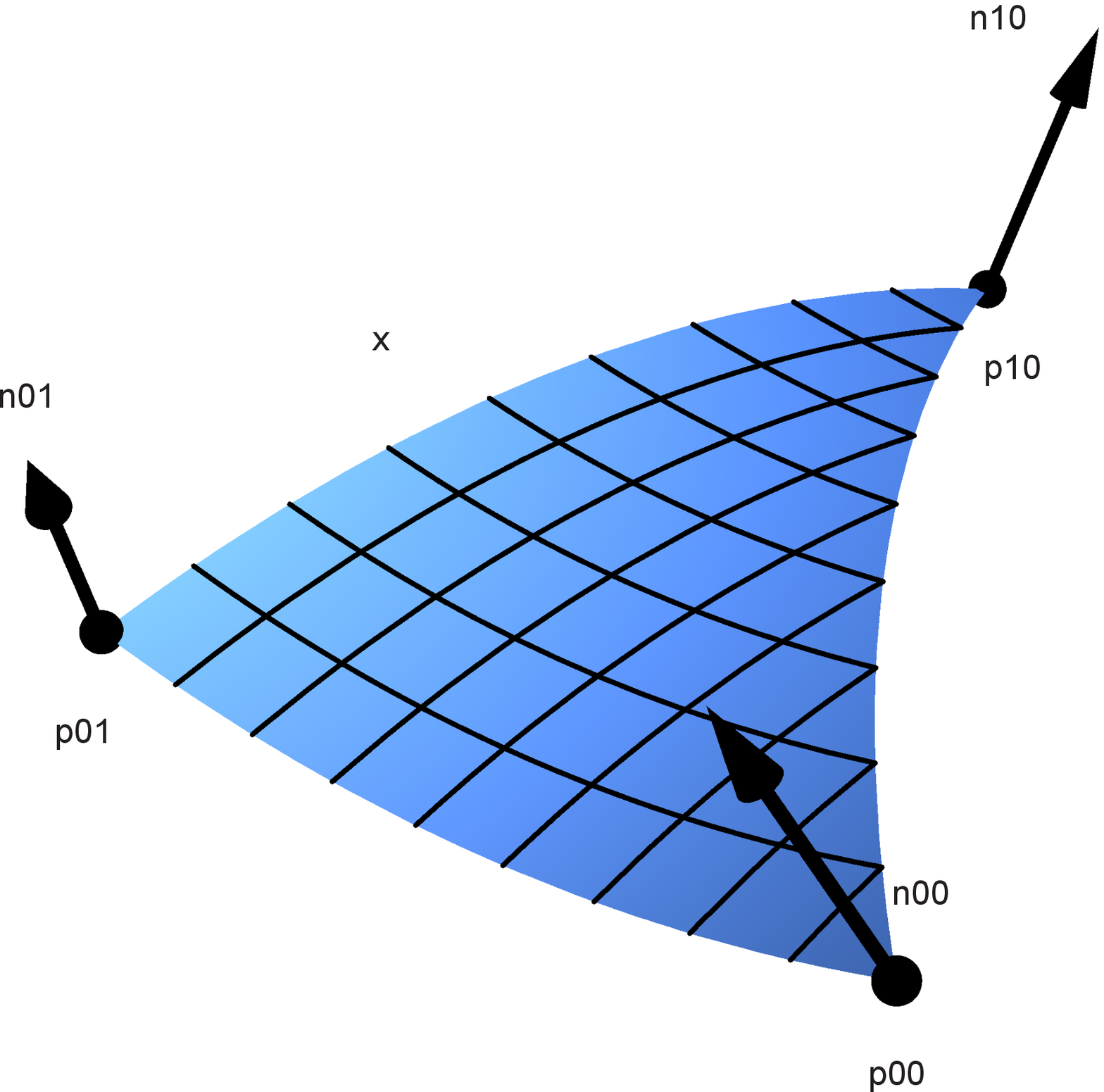}
\begin{minipage}{0.9\textwidth}
\caption{Quadrilateral (left) and triangular (right) polynomial PN patches of degrees $8$ and $4$, respectively, interpolating points $\f p_{ij}$ and possessing tangent planes at $\f p_{ij}$ given by unit normal vectors $\f N_{ij}$  from Examples~\ref{Exmp:1PNpatche} and \ref{Exmp:1PNtriangle}.}\label{Fig:1patch}
\end{minipage}
\end{center}
\end{figure}


\begin{example}\label{Exmp:1PNpatche}\rm
Consider four points
\begin{equation}
\f p_{00} = (-4,3,0)^{\top}, \quad \f p_{10} = (4,-3,0)^{\top}, \quad \f p_{11} = (4,9,-8)^{\top}, \quad \f p_{01} = (-4,15,-8)^{\top},
\end{equation}
and the associated unit normal vectors
\begin{equation}
\f N_{00} = \left(-\frac{3}{5},0,-\frac{4}{5}\right)^{\top}, \,\f N_{10} = \left(\frac{3}{5},0,-\frac{4}{5}\right)^{\top}, \, \f N_{11} = \left(-\frac{2}{7},-\frac{6}{7},-\frac{3}{7}\right)^{\top}, \, \f N_{01} = \left(-\frac{6}{7},-\frac{3}{7},-\frac{2}{7}\right)^{\top}.
\end{equation}
Our goal is to construct a quadrilateral PN patch of a low degree interpolating the prescribed points and normals.

\medskip
The distribution of $\f N_{ij}$ on $\mathcal{S}^2$ shows that it is possible to use the standard stereographic projection
\begin{equation}\label{sterproj}
\pi: \mathcal{S}^2 \setminus \{ \f w \} \rightarrow \R^2,\quad (x_1,x_2,x_3)\mapsto \frac{(x_1,x_2)}{1-x_3}.
\end{equation}
with the center $\f w = (0,0,1)$. We project $\f N_{ij}$  via $\pi$ and construct the quadratic planar patch \eqref{BL_patch} in the form
\begin{equation}
\widehat{\f N}(u,v) = \frac{1}{15} \left( -3 u v+10 u-5 v-5,-4 u v-5 v \right).
\end{equation}

Then lifting $\widehat{\f N}$ back on $\mathcal{S}^2$ gives a rational spherical patch interpolating $\f N_{ij}$. Omitting the denominator we arrive at the polynomial vector field
\begin{equation}\label{Eq:stereographic2}
\f n(u,v) = (2 \widehat{\f N}, \widehat{\f N} \cdot \widehat{\f N} - 1)
\end{equation}
which fulfills the Pythagorean condition
\begin{equation}
\f n(u,v) \cdot \f n(u,v) = \left( \widehat{\f N} \cdot \widehat{\f N} + 1 \right)^2.
\end{equation}
and interpolates data $\f n(i,j) = \lambda_{ij} \, \f N_{ij}$, i.e., the prescribed normal directions. In this case we obtain the polynomial vector field of degree~4
\begin{multline}
\f n(u,v) = \frac{1}{45} \left(-18 u v+60 u-30 v-30, \,  -24 u v-30 v, \right.\\
\left. 5 u^2 v^2-12 u^2 v+20 u^2+14 u v^2-14 u v-20 u+10 v^2+10 v-40 \right).
\end{multline}
with the norm satisfying
\begin{equation}
\Vert \f n(u,v) \Vert^2 = \left[ \frac{1}{45} \left( 5 u^2 v^2-12 u^2 v+20 u^2+14 u v^2-14 u v-20 u+10 v^2+10 v+50 \right) \right]^2.
\end{equation}
Finally, we prescribe a  polynomial parameterization \eqref{Eq:PNinterpolace_surface} of degree 8 and solve the gained systems of linear equations  \eqref{Eq:PNinterpolace_rov1} and \eqref{Eq:PNinterpolace_rov2} -- in particular, we obtain  $2$-parametric solution. One particular patch from this two-parametric family of polynomial PN surfaces interpolating given data is shown in Fig~\ref{Fig:1patch} (left).
\end{example}


\begin{example}\label{Exmp:1PNtriangle}\rm
Consider three points
\begin{equation}
\f p_{00} = (0,0,0)^{\top}, \quad \f p_{10} = (10, -2, 5)^{\top},  \quad \f p_{01} = (4, 8, -3)^{\top},
\end{equation}
and the associated unit normal vectors
\begin{equation}
\f N_{00} = \left(0,0,-1\right)^{\top}, \quad \f N_{10} = \left(\frac{2}{3},-\frac{1}{3},-\frac{2}{3}\right)^{\top}, \quad \f N_{01} = \left( -\frac{2}{11},-\frac{6}{11},-\frac{9}{11} \right)^{\top}.
\end{equation}
Our goal is to construct a triangular PN patch of a low degree interpolating the prescribed points and normals.

\medskip
We use again the standard stereographic projection, cf.~\eqref{sterproj}, and construct the linear triangular planar patch \eqref{L_patch}, i.e.,
\begin{equation}
\widehat{\f N}(u,v) =  \left(\frac{1}{10}(4 u-v),\frac{1}{10} (-2 u-3 v)\right)^{\top}.
\end{equation}
Then lifting $\widehat{\f N}$ back on $\mathcal{S}^2$ and omitting the denominator yields the polynomial vector field
\begin{equation}\label{Eq:stereographic2b}
\f n(u,v) = \left(\frac{1}{5} (4 u-v),\frac{1}{5} (-2 u-3 v),\frac{1}{50} \left(10 u^2+2 u v+5 v^2-50\right)\right)^{\top}
\end{equation}
fulfilling the Pythagorean condition
\begin{equation}
\f n(u,v) \cdot \f n(u,v) = \left[ \frac{1}{50} \left(10 u^2+2 u v+5 v^2+50\right) \right]^2.
\end{equation}
Finally, we prescribe a  polynomial parameterization \eqref{Eq:PNinterpolace_surface} of degree four and solve the systems of linear equations  \eqref{Eq:PNinterpolace_rov1} and \eqref{Eq:PNinterpolace_rov2} which yields  $1$-parametric solution, see Fig.~\ref{Fig:1patch} (right) for one particular solution.
\end{example}


\medskip
In what follows, we present how the designed approach can be easily modified also for computing smoothly joined quadrilateral patches. Suppose that we are given a network of arranged points $\f p_{ij}$ with the associated unit normal vectors $\f N_{i,j}$, where $i\in \{0,1,\ldots, m\}$ and $j\in \{0,1,\ldots, n \}$. Our goal is to construct a set of $m \times n$ polynomial PN patches $\f x_{i,j}(u,v)$ for  $i\in \{1,\ldots, m \}$, $j\in \{1,\ldots, n \}$. Each patch will be defined on the interval $[0,1]\times [0,1]$ and will interpolate the corner points $\f p_{i-1,j-1}$, $\f p_{i,j-1}$, $\f p_{i-1,j}$, $\f p_{i,j}$ together with the corresponding normals. In addition, the union of these patches $\f x=\bigcup_{i,j}\f x_{i,j}$ is required to be globally $G^1$ continuous.

Using the method described above we construct the normal vector fields $\f n_{i,j}(u,v)$, $i=0,...,m$, $j=0,...,n$ for each part separately  such that the constructed $\f n_{i,j}(u,v)$ are globally $C^0$ continuous  (or joined with higher continuity when needed). We recall that local constructions as e.g. Coons patches of suitable degree, see \citep{Fa88}, are especially useful. Then for each patch we gather equations \eqref{Eq:PNinterpolace_rov1} and \eqref{Eq:PNinterpolace_rov2} which give us the whole system of linear equations corresponding to a block-structured matrix.  Finally we have to add to this system of equations additional suitable linear equations responsible for the smooth joint of the constructed patches. In particular for two patches $\f x_{i,k}$ and $\f x_{i,k+1}$, it is enough to add the following equations ensuring the $C^0$ continuity:
\begin{equation}
\f x_{i,k}(u,1)  \equiv \f x_{i,k+1}(u,0).
\end{equation}
As a result, the patches $\f x_{ik}$ and $\f x_{ik+1}$ will join with $G^1$ continuity since they have already  prescribed normal vector fields with are $C^0$ continuous.

\begin{figure}[tb]
\begin{center}
\psfrag{n00}{$\pi(\f N_{00})$}
\psfrag{n10}{$\pi(\f N_{10})$}
\psfrag{n20}{$\pi(\f N_{20})$}
\psfrag{n30}{$\pi(\f N_{30})$}

\psfrag{n01}{$\pi(\f N_{01})$}
\psfrag{n55}{$\pi(\f N_{11})$}
\psfrag{n21}{$\pi(\f N_{21})$}
\psfrag{n31}{$\pi(\f N_{31})$}

\psfrag{n02}{$\pi(\f N_{02})$}
\psfrag{n12}{$\pi(\f N_{12})$}
\psfrag{n22}{$\pi(\f N_{22})$}
\psfrag{n32}{$\pi(\f N_{32})$}

\psfrag{n03}{$\pi(\f N_{03})$}
\psfrag{n13}{$\pi(\f N_{13})$}
\psfrag{n23}{$\pi(\f N_{23})$}
\psfrag{n33}{$\pi(\f N_{33})$}
\includegraphics[width=0.35\textwidth]{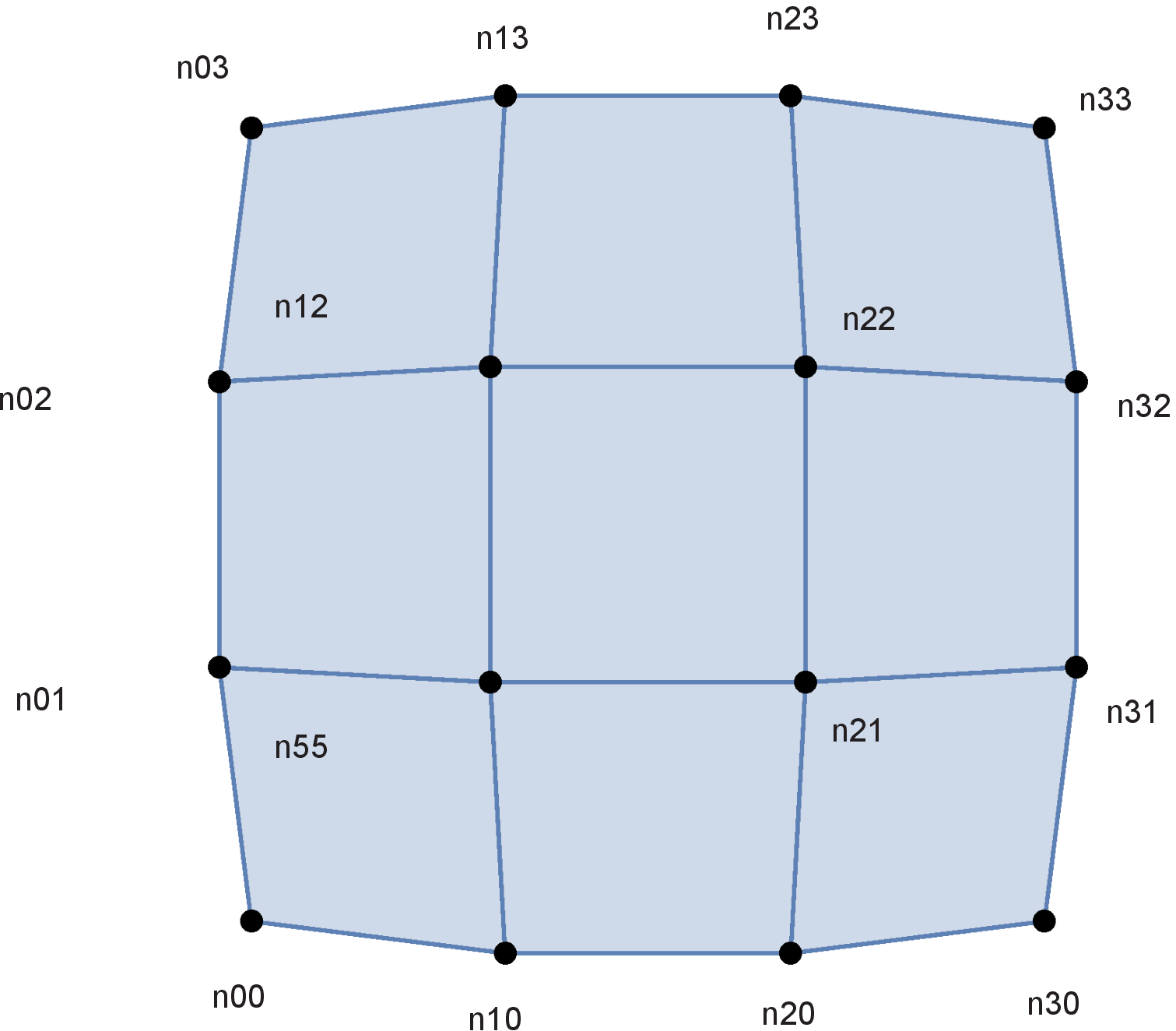}
\hspace{5ex}
\includegraphics[width=0.45\textwidth]{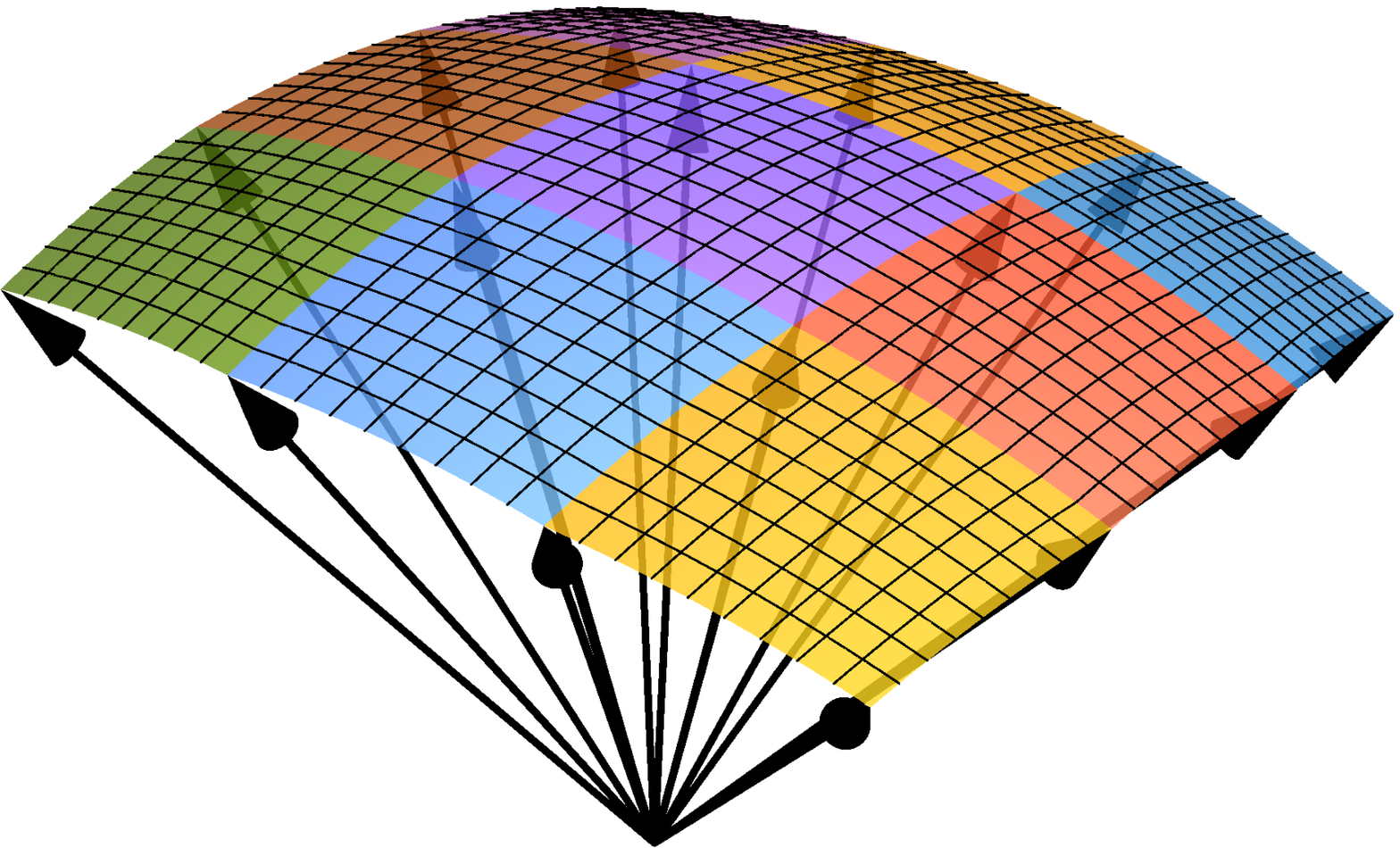}
\begin{minipage}{0.9\textwidth}
\caption{Polynomial patches $\widehat{\f N}_{ij}$ interpolating $\pi(\f N_{ij})$ in plane (left), and polynomial normal vector fields $\f n_{ij}$ interpolating $\lambda_{ij} \f N_{ij}$ (right) from Example~\ref{Exmp:9PNpatches}.}\label{Fig:9patch-fields}
\end{minipage}
\end{center}
\end{figure}

\begin{figure}[tb]
\begin{center}
\psfrag{x}{$\f{x}(u,v)$}
\psfrag{p00}{$\f p_{00}$}
\psfrag{p10}{$\f p_{10}$}
\psfrag{p22}{$\f p_{11}$}
\psfrag{p01}{$\f p_{01}$}
\psfrag{n00}{$\f n_{00}$}
\psfrag{n10}{$\f n_{10}$}
\psfrag{n22}{$\f n_{11}$}
\psfrag{n01}{$\f n_{01}$}
\includegraphics[width=0.65\textwidth]{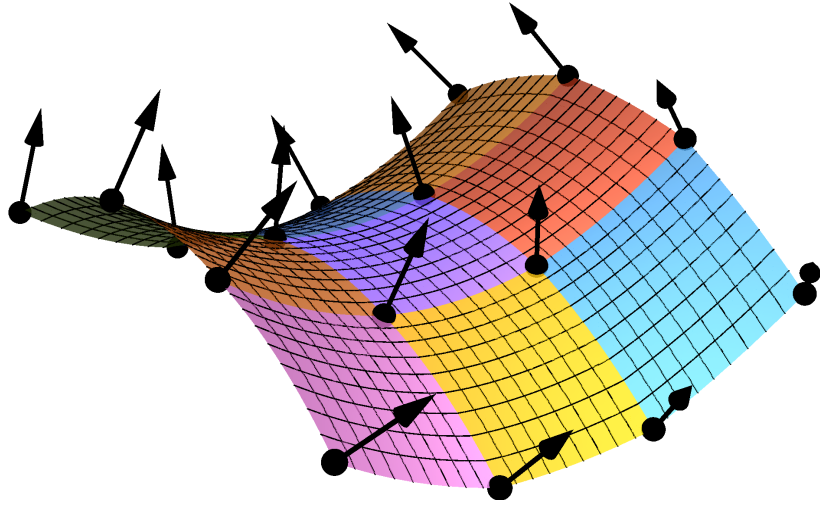}
\begin{minipage}{0.9\textwidth}
\caption{Nine polynomial PN patches $\f x_{i,j}(u,v)$ of degree $9$ interpolating points $\f p_{ij}$ and possessing tangent planes given by the unit normal vectors $\f N_{ij}$ at $\f p_{ij}$ from Example~\ref{Exmp:9PNpatches}.}\label{Fig:9patch}
\end{minipage}
\end{center}
\end{figure}

\begin{example}\label{Exmp:9PNpatches}\rm
Consider 16 points $\f p_{ij}$, $i,j=0,\ldots,3$, and the associated unit normal vectors $\f N_{ij}$, see Fig.~\ref{Fig:9patch}. The distribution of $\f N_{ij}$ on $\mathcal{S}^2$ shows again that also in this example it is possible to use the standard stereographic projection \eqref{sterproj}

We project the unit vectors $\f N_{ij}$ to plane, construct nine $C^0$ planar patches, see Fig.~\ref{Fig:9patch-fields} (left), and lift them back to space, see Fig.~\ref{Fig:9patch-fields} (right). Then, we construct nine polynomial patches of degree nine such that each patch corresponds to equations \eqref{Eq:PNinterpolace_rov1} and \eqref{Eq:PNinterpolace_rov2}. Moreover we will consider equations:
\begin{equation}
\begin{array}{lclll}
\f x_{ik}(u,1) & \equiv  &\f x_{ik+1}(u,0), & i=1,2,3 & k=1,2;\\
\f x_{kj}(1,v) & \equiv  &\f x_{k+1,j}(0,v), & j=1,2,3 & k=1,2.
\end{array}
\end{equation}
Finally, by solving the whole system of linear equations we arrive at one-parametric solution. One particular solution is shown in Fig.~\ref{Fig:9patch}.
\end{example}


\medskip
Now we present how the designed approach can be easily adapted also for constructing smoothly joined triangular patches. The following example presents computing approximated polynomial PN parameterizations of patches on given surfaces, and thus also computing approximate (piecewise) polynomial PN parameterizations either of non-PN surfaces, or of PN surfaces with rational PN parameterizations only.

\begin{example}\label{Exmp:elipsoid}\rm
Consider the ellipsoid $\mathcal{E}$ with the implicit equation
\begin{equation}\label{Eq:ellipsoid}
f(x,y,z) = 4 x^2+9y^2+9z^2-9 = 0.
\end{equation}
We approximate the ellipsoid with piecewise polynomial PN parametrization. In particular, we parameterize one octant corresponding to unit normal vectors:
\begin{equation}
\f N_{00} = \left(0,0,-1\right)^{\top}, \quad \f N_{01} = \left(0,1,0\right)^{\top}, \quad \f N_{10} = \left(1,0,0\right)^{\top}
\end{equation}
and by symmetry, we find the remainder seven octants. Solving
\begin{equation}
f(x,y,z) = 0, \quad \nabla f(x,y,z) = \alpha_{ij} \, \f N_{ij}, \qquad \alpha
_{ij} \in \R,
\end{equation}
yields two points for each normal vector. From each pair we choose one point such that all chosen points lie in the same octant, e.g.,
\begin{equation}
\f p_{00} = \left(0,0,-1\right)^{\top}, \quad \f p_{01} = \left(0,1,0\right)^{\top}, \quad \f p_{10} = \left(\frac{3}{2},0,0\right)^{\top}
\end{equation}
Next we interpolate vectors $\f n_{ij} = \lambda_{ij} \f N_{ij}$ by a polynomial vector field $\f n(u,v)$ fulfilling the Pythagorean property. In particular using stereographic projection \eqref{sterproj} we project $\f N_{ij}$ to the plane and in the plane we construct
\begin{equation}
\widehat{\f N}(u,v) = \left(\frac{\sqrt{2} u v-u v+v}{\sqrt{2} u v-2 u v+1},\frac{\sqrt{2} u v-u v+u}{\sqrt{2} u v-2 u v+1}\right)^{\top}, \quad u,v\geq0, \, u+v \leq1,
\end{equation}
as a rational triangular B\'{e}zier patch, see Fig.~\ref{Fig:el-fields}~(left).

Lifting $\widehat{\f N}(u,v)$ via $\pi^{-1}$  and omitting the denominator yields a Pythagorean normal vector field $\f n(u,v)$ of bi-degree two. The vector field was constructed such that the symmetry yields a $C^0$ continuous normal vector field of the whole ellipsoid, see Fig.~\ref{Fig:el-fields}~(right).

\begin{figure}[tb]
\begin{center}
\psfrag{N00}{$\pi(\f N_{00})$}
\psfrag{N10}{$\pi(\f N_{10})$}
\psfrag{N01}{$\pi(\f N_{01})$}
\includegraphics[width=0.25\textwidth]{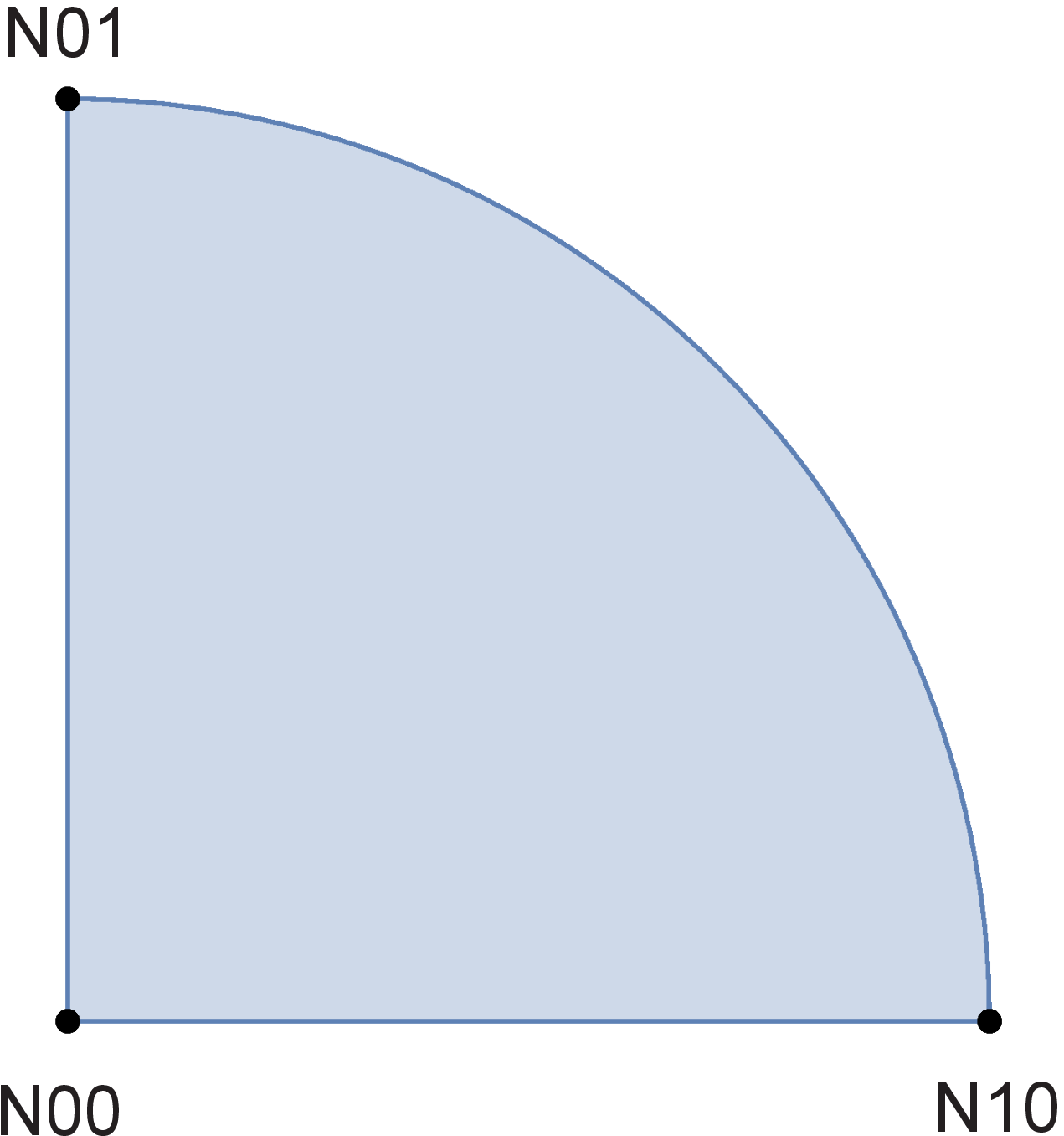}
\hspace{8ex}
\includegraphics[width=0.45\textwidth]{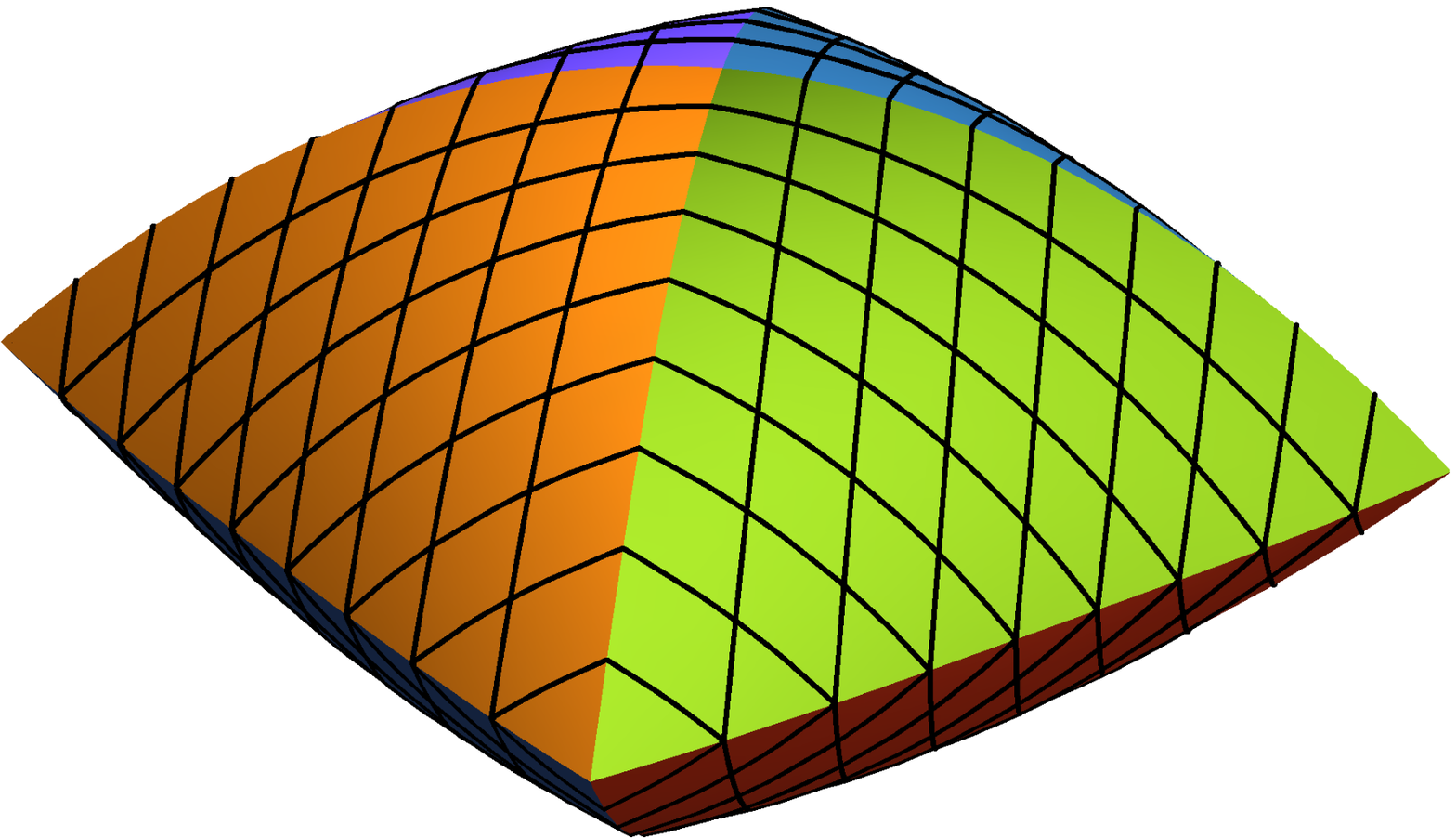}
\begin{minipage}{0.9\textwidth}
\caption{Polynomial patch $\widehat{\f N}_{ij}$ interpolating $\pi(\f N_{ij})$ in plane (left), and a corresponding polynomial PN approximation of the normal vector field of the whole ellipsoid (right) from Example~\ref{Exmp:elipsoid}.}\label{Fig:el-fields}
\end{minipage}
\end{center}
\end{figure}

Now we construct a PN surface (one triangular PN patch) of degree $12$. Solving equations \eqref{Eq:PNinterpolace_rov1} and \eqref{Eq:PNinterpolace_rov2} together with equations
\begin{equation}
\f x(u,0) \cdot (1,0,0)^{\top} \equiv 0, \qquad \f x(0,v) \cdot (0,1,0)^{\top} \equiv 0,  \qquad \f x(u,1-u) \cdot (0,0,1)^{\top} \equiv 0,
\end{equation}
which guarantee a possibility to use the symmetry and thus to obtain all remaining seven patches, yields a $5$-parametric solution. We choose the most suitable one by minimizing the following objective function
\begin{equation}
\Phi(\f t) = \int_0^1 \int_0^{1-v} \frac{f^2(\f x(u,v,\f t))}{\Vert \nabla f(\f x(u,v,\f t)) \Vert^2} \, \mathrm{d}u \, \mathrm{d}v, \quad \f t = (t_1,t_2,t_3,t_4,t_5)^{\top},
\end{equation}
which is responsible for the deviation of the parametrization $\f x(u,v,\f t)$ from the implicit surface $f=0$. In this case we obtain the error smaller then $2.4 \cdot 10^{-6}$. Finally using the symmetry we obtain the approximate piecewise polynomial PN parametric description of the whole ellipsoid \eqref{Eq:ellipsoid}, see Fig.~\ref{Fig:elipsoid}.
\end{example}

\begin{figure}[tb]
\begin{center}
\includegraphics[width=0.4\textwidth]{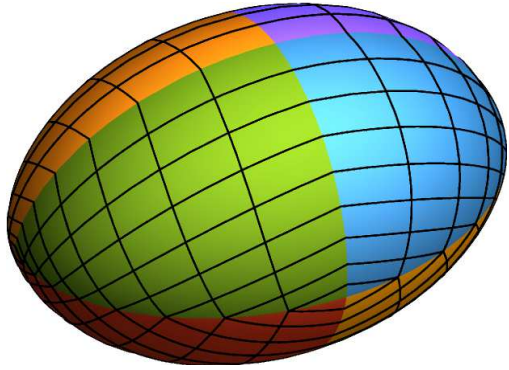}
\begin{minipage}{0.9\textwidth}
\caption{The polynomial approximate PN parametrization of the ellipsoid from Example~\ref{Exmp:elipsoid}.}\label{Fig:elipsoid}
\end{minipage}
\end{center}
\end{figure}

\begin{remark}\rm
Let emphasize that when a higher continuity of the constructed interpolation piecewise polynomial surface is needed, then the presented method can be still applied. It is enough to  increase the degree of the PN parameterizations (to have more free parameters) and add suitable extra continuity constrains (again linear equations) to the original linear system. Especially, when e.g. the $G^2$ continuity of the joint between two patches is required it is necessary to construct $C^1$ continuous normal vector fields  (e.g. applying the bi-cubic Coons construction in the quadrilateral case, or cubic Clough-Tocher or quadratic Powell-Sabin elements in the triangular case).
\end{remark}

\subsection{Hermite interpolation by piecewise polynomial medial surface transforms yielding rational envelopes}\label{Se:MOS_interpolation}

The ideas formulated in the previous section for PN surfaces can be easily adapted also for Hermite interpolation with polynomial MOS surfaces. We present the approach at least for one quadrilateral and one triangular patch. We recall that interpolations by MOS surfaces can be used, for instance, when rational blending or skinning surfaces are constructed as the envelopes of two-parameter families of spheres.

Consider four points ${\f p}_{ij} \in \R^{3,1}$, $i,j=0,1$, and four associated tangent planes ${\tau}_{ij}$ determined by the vectors ${\f t}_{ij,1}$ and ${\f t}_{ij,2}$. We find the ideal lines of ${\tau}_{ij}$, compute the conjugated lines with respect to $\Sigma$ (i.e., the ideal lines of the normal planes $\nu_{ij}$ at $\f {\f p}_{ij}$), and by intersecting them with the absolute quadric  $\Sigma$, cf. \eqref{abskvadr}, we arrive at the isotropic vectors $\f n^{\pm}_{ij}$. Next, we interpolate the isotropic Gauss image $\cal G^{\pm}$ (see Section~\ref{poly MOS}), i.e., given data $\f n^{\pm}_{ij}$, by suitable rational patches and taking them as the input for \eqref{eq MOS soustava} we arrive at an MOS patch interpolating given Hermite data $\{\f p_{ij},\tau_{ij}\}$.

\begin{example}\label{Exmp:MOS-interpolation}\rm
Consider in $\R^{3,1}$ the points
\begin{equation}
\f p_{00} = (0, 0, -3, 1)^{\top}, \quad \f p_{10} = (10, 0, 0, 2)^{\top}, \quad \f p_{11} = (10, 8, 3, 3)^{\top}, \quad \f p_{01} = (0, 8, 0, 2)^{\top},
\end{equation}
and the tangent vectors
\begin{equation}
\begin{array}{llll}
\f t_{00,1} = (1, -1, 0, 0)^{\top}, & \f t_{10,1} = (7, -7, 4, 1)^{\top}, & \f t_{11,1} = (53, -31, 1, -1)^{\top}, & \f t_{01,1} = (9, -9, -7, -3)^{\top}; \\[1ex]
\f t_{00,2} = (1, 1, 0, 0)^{\top}, & \f t_{10,2} = (7, 7, 4, 1)^{\top}, & \f t_{11,2} = (-23, 15, 1, 1)^{\top}, & \f t_{01,2} = (9, 9, 7, 3)^{\top},
\end{array}
\end{equation}
determining the tangent planes $\tau_{ij}$ at $\f p_{ij}$.

Then solving
\begin{equation}\label{Eq:iso_nor_exmp}
\langle \f n_{ij}, \f t_{ij,1} \rangle = 0, \quad \langle \f n_{ij}, \f t_{ij,2} \rangle = 0, \quad \langle \f n_{ij}, \f n_{ij} \rangle = 0,
\end{equation}
yields the following isotropic normal vectors:
\begin{equation}
\begin{array}{llll}
\f n_{00}^+ = (0,0,-1,1)^{\top}, & \f n_{10}^+ = (3,0,-4,5)^{\top}, & \f n_{11}^+ = (4,7,-4,9)^{\top}, & \f n_{01}^+ = (0,4,-3,5)^{\top}; \\[1ex]
\f n_{00}^- = (0,0,1,1)^{\top}, & \f n_{10}^- = (-5,0,12,13)^{\top}, & \f n_{11}^- = (-2,-3,6,7)^{\top}, & \f n_{01}^- = (0,-5,12,13)^{\top}.
\end{array}
\end{equation}
W.l.o.g, we choose for instance $\f n_{ij}^+$ and  compute the associated normals $\f N_{ij} = (n_1,n_2,n_3)/n_4$ on the unit sphere $\mathcal{S}^2$, i.e.,
\begin{equation}
\f N_{00} = \left(0,0,-1\right)^{\top}, \, \f N_{10} = \left(\frac{3}{5},0,-\frac{4}{5}\right)^{\top}, \, \f N_{11} = \left(\frac{4}{9},\frac{7}{9},-\frac{4}{9}\right)^{\top}, \, \f N_{01} = \left(0,\frac{4}{5},-\frac{3}{5}\right)^{\top}.
\end{equation}
Next we interpolate data  $\f N_{ij}$ by a rational patch $\f N(u,v) = (N_1/N4,N_2/N4,N_3/N4)$ on the unit sphere $\mathcal{S}^2$, see Section~\ref{Se:PN_interpolation}, and finally we arrive at $\f n^+(u,v) = (N_1,N_2,N_3,N_4)$ as the isotropic normal field interpolating data $\lambda_{ij} \, \f n_{ij}^+$.

Then we prescribe a polynomial parameterization
\begin{equation}\label{Eq:MOSinterpolace_surface}
\f x(u,v)=
\left(
\sum_{i+j\leq6} x_{1ij}{u^iv^j},
\sum_{i+j\leq6} x_{2ij}{u^iv^j},
\sum_{i+j\leq6} x_{3ij}{u^iv^j},
\sum_{i+j\leq6} x_{4ij}{u^iv^j},
\right)^{\top},
\end{equation}
of degree six and solve linear system of equations \eqref{eq:88} together with the equations:
\begin{equation}\label{Eq:MOSinterpolace_rov2}
\f x(i,j) =  \f p_{ij}, \quad \langle \f x_u(i,j), \f n_{ij}^- \rangle = 0, \quad \langle \f x_v(i,j), \f n_{ij}^- \rangle = 0, \quad i,j=0,1.
\end{equation}
Let us emphasize that equations \eqref{Eq:MOSinterpolace_rov2} must be added to ensure the prescribed interpolation conditions, i.e., that $\f x(u,v)$ is tangent to $\tau_{ij}$ at $\f p_{ij}$. Finally we obtain $8$-parametric set of polynomial MOS surfaces of degree six interpolating given Hermite data $\{\f p_{ij}, \tau_{ij} \}$, see Fig.~\ref{Fig:MOS} (left) for one particular example from the set of solutions.
\end{example}

The triangular patch would be treated analogously to the quadrilateral one, see the following example.

\begin{example}\label{Exmp:MOS-interpolation2}\rm
Consider in $\R^{3,1}$ three points
\begin{equation}
\f p_{00} = (0, 0, -4, 1)^{\top}, \quad \f p_{10} = (8, -5, 0, 2)^{\top}, \quad \f p_{01} = (3, 6, 0, 2)^{\top},
\end{equation}
and the three pairs of tangent vectors
\begin{equation}
\begin{array}{lll}
\f t_{00,1} = (1, -1, 0, 0)^{\top}, & \f t_{10,1} = (8, -8, 9, 2)^{\top}, & \f t_{01,1} = (41, -41, -15, -7)^{\top}; \\[1ex]
\f t_{00,2} = (1, 1, 0, 0)^{\top}, & \f t_{10,2} = (16, 16, 5, 2)^{\top}, & \f t_{01,2} = (41, 41, 61, 23)^{\top},
\end{array}
\end{equation}
determining the tangent planes $\tau_{ij}$ at $\f p_{ij}$. By Solving \eqref{Eq:iso_nor_exmp} we arrive at the isotropic normal vectors:
\begin{equation}
\begin{array}{llll}
\f n_{00}^+ = (0,0,-1,1)^{\top}, & \f n_{10}^+ = (2,-1,-2,1)^{\top}, & \f n_{01}^+ = (4, 7, -4, 9)^{\top}; \\[1ex]
\f n_{00}^- = (0,0,1,1)^{\top}, & \f n_{10}^- = (-3,2,6,7)^{\top}, & \f n_{01}^- = (-2, -3, 6, 7)^{\top}.
\end{array}
\end{equation}
Again, we choose e.g. $\f n_{ij}^+$, compute the associated normals $\f N_{ij} = (n_1,n_2,n_3)/n_4$ on the unit sphere $\mathcal{S}^2$ and construct the spherical triangular patch interpolating $\f N_{ij}$ (see Section~\ref{Se:PN_interpolation}), i.e.,
\begin{equation}\label{Eq:MOS_troj_nor}
\f N(u,v) = \left(\frac{52 u+40 v}{13 u^2+2 u v+25 v^2+65},\frac{70 v-26u}{13 u^2+2 u v+25 v^2+65},\frac{13 u^2+2 u v+25 v^2-65}{13u^2+2 u v+25 v^2+65}\right)^{\top}.
\end{equation}
Then we arrive at $\f n^+(u,v) = (N_1,N_2,N_3,N_4)$ as the isotropic normal field interpolating data $\lambda_{ij} \, \f n_{ij}^+$, where $(N_1/N_4,N_2/N_4,N_3/N_4)$ is given by \eqref{Eq:MOS_troj_nor}.

Finally we prescribe a polynomial quartic parameterization \eqref{Eq:MOSinterpolace_surface} and solving linear system of equations \eqref{eq:88} together with the equations \eqref{Eq:MOSinterpolace_rov2} (now for $i+j<2$) yields $7$-parametric set of quartic polynomial MOS surfaces interpolating given Hermite data $\{\f p_{ij}, \tau_{ij} \}$, see Fig.~\ref{Fig:MOS} (right) for one chosen triangular patch from the set of all solutions.
\end{example}

\begin{figure}[tb]
\begin{center}
\psfrag{A}{ }
\psfrag{a}{$\f{x}(u,v)$}
\psfrag{au}{$\f{x}_u$}
\psfrag{av}{$\f{x}_v$}
\psfrag{T}{$\tau(u,v)$}
\psfrag{ilt}{}
\psfrag{n1}{$\f{n}_1$}
\psfrag{n2}{$\f{n}_2$}
\psfrag{pi}{$\pi_{\infty}:x_0=0$}
\includegraphics[width=0.47\textwidth]{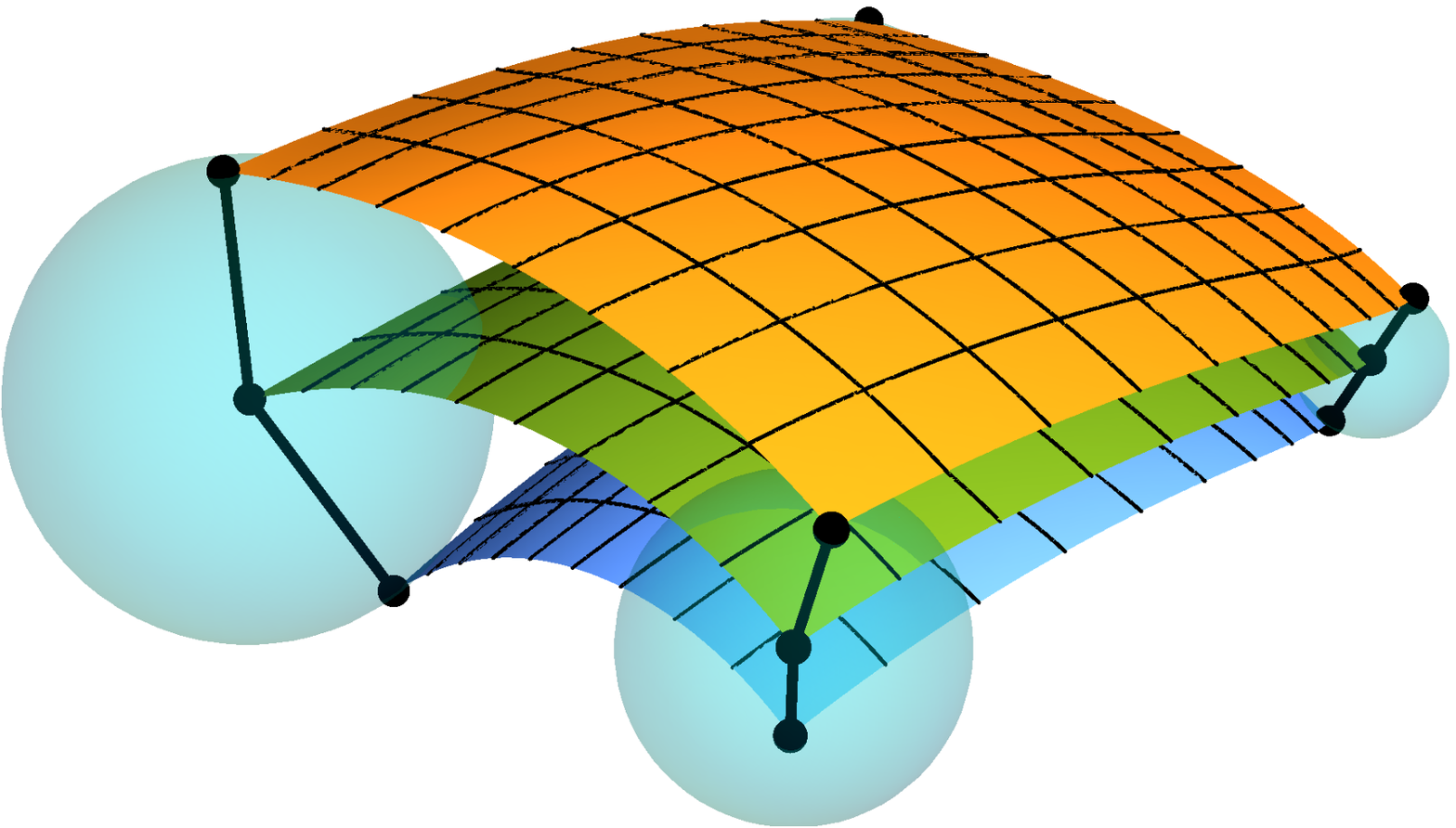}
\hspace{3ex}
\includegraphics[width=0.47\textwidth]{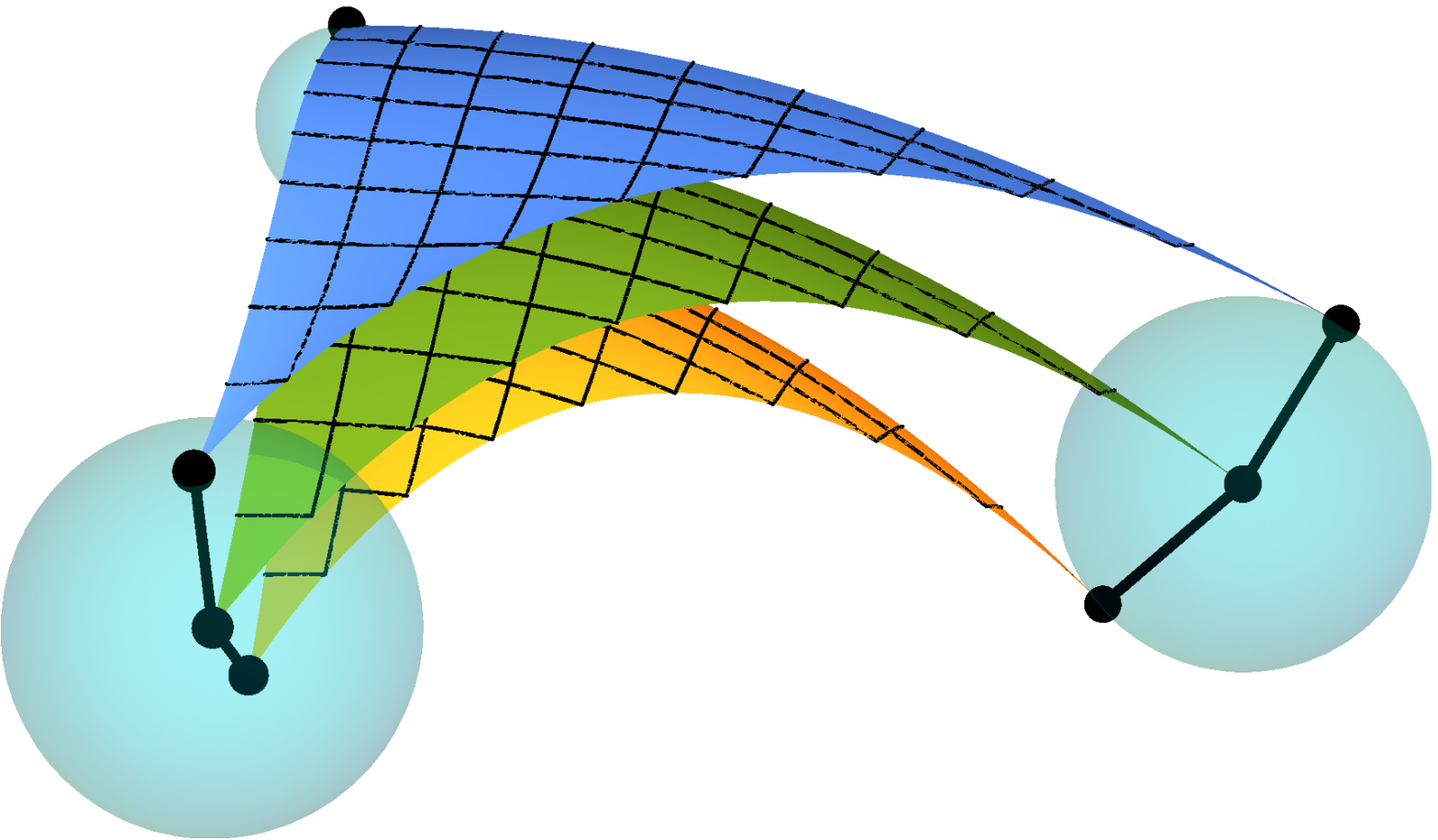}
\begin{minipage}{0.9\textwidth}
\caption{Quadrilateral and triangular medial surfaces (green) with associated rational envelopes (yellow, blue) from Examples~ \ref{Exmp:MOS-interpolation} and \ref{Exmp:MOS-interpolation2}, respectively.}\label{Fig:MOS}
\end{minipage}
\end{center}
\end{figure}

\section{Concluding remarks}\label{sec_concl}

In this paper the problem of Hermite interpolations by piecewise
polynomial surfaces with polynomial area element was investigated.
It was shown that the interpolation problem can be always transformed to solving a
system of linear equations and the same approach is suitable not only
for polynomial PN surfaces but also for polynomial MOS surfaces.
Simplicity and functionality of the designed algorithm was presented on
several examples. In our future work we would like to focus on better
understanding of the quantity~$\Delta$ in \eqref{Eq:odhad} responsible for increasing free
parameters in the construction, on the study of existence (or its
eliminating) of the factor $f(u,v)$ which causes vanishing of the normals
along a curve on the surface, and finally on the construction of
polynomial PN patches given by the boundary curves, which is a
challenging open problem in geometric modelling.

\section*{Acknowledgments}

The authors Michal Bizzarri, Miroslav L\'{a}vi\v{c}ka and Jan Vr\v{s}ek were supported by the project LO1506 of the Czech Ministry of Education, Youth and Sports.

\bigskip
\begin{appendix}
\section{Appendix}


\smallskip
Let $\f N(u,v,w)$ be the homogenization of the normal vector field $\f n(u,v)$, i.e., it is a triple of homogeneous polynomials $N_i(u,v,w)$ of the same degree $k$. As already mentioned earlier, the result depends on the occurrence of the base points of $\f N$ over $\C$.  Denote $R=\C[u,v,w]$ the coordinate ring of $\mathbb{P}^2_\C$. Analogously to $\mathrm{Syz}(\f n)$ we define $\mathrm{Syz}(\f N)$ which is a submodule of $R^3$ in this case.

Next, it is known that $R$ is a graded module over itself whose graded pieces $R_i$ are formed by the  sets of homogeneous polynomials of degree $i$. Obviously each $R_i$ is a finite--dimensional vector space over $\C$. For such a~graded module  $S$, the {\em Hilbert function} is  defined as
\begin{equation}
  \hf(S,\ell)=\dim_{\C}S_{\ell}.
\end{equation}
We introduce the standard notation $R(j)$ for the shifted module, i.e.,  $R(i)_j=R_{i+j}$. Hence the Hilbert function of this module is
\begin{equation}\label{eq hf shifted}
\hf(R(j),\ell)={{j+\ell+2}\choose{2}}.
\end{equation}
Let us emphasize that unlike $\mathrm{Syz}(\f n)$ the homogeneous syzygy module $\mathrm{Syz}(\f N)$ is not free anymore. The reason is that the basis of $\mathrm{Syz}(\f n)$ does not remain a basis of syzygy module after homogenization. To see this  let $\f N(u,v,w)=(2uw,2vw,w^2-u^2-v^2)$ be the homogenization of the vector field $\f n$ from Example~\ref{ex mu basis}. Then obviously $\f P=(v,-u,0)\in\mathrm{Syz}(\f N)$ but there is no way how to write it as $\f R[u,v,w]$-linear combination of
\begin{equation}
 \f Q=(u^2-w^2,uv,2uw) \quad\text{and}\quad\f R=(uv, v^2-w^2,2 vw).
\end{equation}
In fact $\mathrm{Syz}(\f N)$ is generated by $\f P$, $\f Q$ and $\f R$. Nevertheless they do not form a basis because of the dependence relation $ w^2 \f P + v \f Q - u \f R = 0$.

Next let $I=\langle N_1,N_2,N_3\rangle$  denote the ideal generated by the components of the normal field $\f N$. Then there exists the so called {\em Koszul complex}
\begin{equation}\label{eq koszul}
\begin{array}{c}
\xymatrix{
0\ar[r]&  R(-3k)\ar[r]^-{\delta_3}& \displaystyle \bigoplus_{i=1}^3R(-2k)\ar[r]^-{\delta_2}& \displaystyle \bigoplus_{i=1}^3R(-k)\ar[r]^-{\delta_1}&I\ar[r]&0
}
\end{array},
\end{equation}
where the differentials are given by
\begin{equation}\label{eq koszul differentials}
  \delta_3=\left[
  \begin{array}{c}
    N_3\\-N_2\\N_1
  \end{array}
  \right],
  \quad
   \delta_2=\left[
  \begin{array}{ccc}
    N_2&N_3&0\\
    -N_1&0&N_3\\
    0&-N_1&-N_2
  \end{array}
  \right]
  \quad\text{and}\quad
   \delta_1=\left[
  \begin{array}{ccc}
    N_1 &N_2& N_3
  \end{array}
  \right]
\end{equation}
This complex is known to be exact if and only if the sequence $N_i$, $i=1,2,3$, is regular. If $\f N$ admits the base points then the sequence cannot be regular. Nevertheless we have, cf. \citep{CoSch03}

\begin{lemma}\label{lem cox schenck}
  If $I=\langle N_1,N_2,N_3\rangle$ has a codimension two in $R$ then complex \eqref{eq koszul} is exact except at $\bigoplus_{i=1}^3R(-k)$.
\end{lemma}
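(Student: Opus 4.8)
The plan is to recognise complex \eqref{eq koszul} as the Koszul complex of the sequence $N_1,N_2,N_3$ and to read off its exactness from the classical relationship between Koszul homology and the depth (grade) of the ideal $I=\langle N_1,N_2,N_3\rangle$. Writing the complex homologically, the term $R(-3k)$ sits in degree $3$, $\bigoplus_{i=1}^3 R(-2k)$ in degree $2$, $\bigoplus_{i=1}^3 R(-k)$ in degree $1$, and $I$ in degree $0$. Exactness at $I$ is automatic, since by definition $I=\mathrm{im}\,\delta_1$; and exactness at $R(-3k)$ is elementary: as $R$ is an integral domain and $I\neq 0$, the equality $\delta_3(g)=(N_3g,-N_2g,N_1g)=0$ forces $g=0$, so $\delta_3$ is injective. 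Hence the only homology that can survive besides the one at $\bigoplus_{i=1}^3R(-k)$ is the one at $\bigoplus_{i=1}^3R(-2k)$, and it is this term that the hypothesis must eliminate.

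The key input I would invoke is the depth-sensitivity (acyclicity) theorem for Koszul complexes: for a finite sequence $f_1,\dots,f_n$ in a Noetherian ring $R$ with $I=\langle f_1,\dots,f_n\rangle$, the Koszul homology $H_i$ vanishes for all $i>n-\mathrm{grade}(I,R)$, the top surviving homology occurring precisely in degree $n-\mathrm{grade}(I,R)$ (see \citep{CoSch03} and the commutative-algebra references therein). Since $R=\C[u,v,w]$ is a polynomial ring, hence Cohen--Macaulay, one has $\mathrm{grade}(I,R)=\mathrm{codim}(I)$. The assumption $\mathrm{codim}(I)=2$ therefore gives $\mathrm{grade}(I,R)=2$, so with $n=3$ the homology vanishes for every $i>3-2=1$; that is, $H_2=H_3=0$, while $H_1$ may be nonzero.

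Translating back, $H_3=0$ and $H_2=0$ say precisely that the complex is exact at $R(-3k)$ and at $\bigoplus_{i=1}^3R(-2k)$, and together with the automatic exactness at $I$ this yields exactness everywhere except possibly at $\bigoplus_{i=1}^3R(-k)$, as claimed. The genuine content — and the step I expect to be the main obstacle in a self-contained treatment — is the exactness at $\bigoplus_{i=1}^3R(-2k)$, i.e.\ that every syzygy of $(N_1,N_2,N_3)$ lying in the kernel of $\delta_2$ is a Koszul relation of the form $\delta_3(g)$. The injectivity of $\delta_3$ and the surjection onto $I$ are formal, whereas this middle exactness is exactly where the codimension-two (equivalently grade-two) hypothesis is used; packaging it through the depth-sensitivity theorem is what makes the argument short, and a direct proof would instead require a local or regular-sequence analysis of the common zero locus of the $N_i$.
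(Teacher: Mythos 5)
Your proof is correct, but note that the paper does not actually prove this lemma: it is quoted from \citep{CoSch03} with no argument supplied ("Nevertheless we have, cf. ..."). What you have written is essentially the standard argument that underlies that citation. Your decomposition is accurate on all counts: exactness at $I$ is automatic because the complex ends in the image of $\delta_1$ by construction; injectivity of $\delta_3$ needs only that $R=\C[u,v,w]$ is a domain and that $I\neq 0$ (guaranteed by codimension two); and the entire content is the vanishing of the second Koszul homology. Invoking the depth-sensitivity theorem ($H_i=0$ for $i>n-\mathrm{grade}(I,R)$) together with $\mathrm{grade}(I,R)=\mathrm{codim}(I)=2$, valid because the polynomial ring is Cohen--Macaulay, correctly gives $H_2=H_3=0$, hence exactness everywhere except possibly at $\bigoplus_{i=1}^3R(-k)$; the identification of the paper's truncated complex (ending in $I$ rather than $R$) with the relevant portion of the genuine Koszul complex is harmless since $\ker(\delta_1\colon R(-k)^3\to I)$ equals $\ker(\delta_1\colon R(-k)^3\to R)$. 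What your route buys over the paper's is a self-contained justification that makes visible exactly where the codimension-two hypothesis enters (the middle exactness) and why the remaining spots are formal; what it costs is reliance on the acyclicity theorem, itself a nontrivial piece of commutative algebra --- but that is unavoidable, since, as you observe, a direct proof of the middle exactness would amount to reproving a special case of that theorem.
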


\smallskip
Now we can formulate and prove the theorem that gives consequently a result presented in Section~\ref{poly PN}.
\begin{theorem}\label{thm formule}
Let $\f N(u,v,w)$ be a homogeneous normal vector field of degree $k$ as above. Then
\begin{equation}\label{eq sum of hilbert}
 \hf(\mathrm{Syz}(\f N),\ell)\geq 3{{\ell-k+2}\choose{2}}-{{\ell-2k+2}\choose{2}},
\end{equation}
where the equality holds if and only if the normal field is basepoint-free.
\end{theorem}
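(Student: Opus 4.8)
The plan is to read off the Hilbert function of $\mathrm{Syz}(\f N)=\ker\delta_1$ directly from the Koszul complex \eqref{eq koszul}, using additivity of the Hilbert function along short exact sequences together with the Hilbert functions \eqref{eq hf shifted} of the free summands. The decisive observation is that the only term at which \eqref{eq koszul} can fail to be exact is the middle one, $R(-k)^3$. Indeed, $\f N$ is basepoint-free exactly when the common zero locus of $N_1,N_2,N_3$ reduces to the origin, i.e. when $I$ has codimension three and $(N_1,N_2,N_3)$ is a regular sequence; in that case \eqref{eq koszul} is fully exact. If base points do occur then, under the standing hypothesis $\gcd(N_i)=1$, the locus is finite and $I$ has codimension exactly two, so Lemma~\ref{lem cox schenck} applies and \eqref{eq koszul} remains exact everywhere except at $R(-k)^3$. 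Hence in both situations the left-hand part is exact: $\delta_3$ is injective and $\ker\delta_2=\mathrm{im}\,\delta_3$.

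First I would compute the Hilbert function of $\mathrm{im}\,\delta_2$. Since $\delta_3$ is injective with image $\mathrm{im}\,\delta_3=\ker\delta_2$, the first isomorphism theorem gives a graded isomorphism $R(-2k)^3/R(-3k)\cong\mathrm{im}\,\delta_2$, and therefore, by \eqref{eq hf shifted},
\begin{equation}
\hf(\mathrm{im}\,\delta_2,m)=3\binom{m-2k+2}{2}-\binom{m-3k+2}{2}.
\end{equation}
Writing $H_1=\ker\delta_1/\mathrm{im}\,\delta_2$ for the middle homology, the short exact sequence $0\to\mathrm{im}\,\delta_2\to\ker\delta_1\to H_1\to0$ yields
\begin{equation}
\hf(\ker\delta_1,m)=3\binom{m-2k+2}{2}-\binom{m-3k+2}{2}+\hf(H_1,m).
\end{equation}

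It then remains to reconcile the two gradings on the syzygy module. Inside \eqref{eq koszul} the kernel $\ker\delta_1$ lives in $R(-k)^3$, so its degree-$m$ component consists of coefficient triples of degree $m-k$; the theorem, by contrast, grades $\mathrm{Syz}(\f N)$ as a submodule of $R^3$, whence $\hf(\mathrm{Syz}(\f N),\ell)=\hf(\ker\delta_1,\ell+k)$. Substituting $m=\ell+k$ above turns the binomial arguments $m-2k,\,m-3k$ into $\ell-k,\,\ell-2k$ and gives
\begin{equation}
\hf(\mathrm{Syz}(\f N),\ell)=3\binom{\ell-k+2}{2}-\binom{\ell-2k+2}{2}+\hf(H_1,\ell+k).
\end{equation}
As $\hf(H_1,\cdot)\geq0$ this establishes \eqref{eq sum of hilbert}, and equality holds in every degree if and only if the graded module $H_1$ vanishes, i.e. if and only if \eqref{eq koszul} is exact at $R(-k)^3$; by the dichotomy of the first paragraph this is precisely the basepoint-free case.

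The main obstacle is controlling exactness in the two regimes: one must know that passing from codimension three to codimension two breaks exactness of \eqref{eq koszul} only at the middle term, so that the computation of $\hf(\mathrm{im}\,\delta_2)$ stays valid in the presence of base points. This is exactly what Lemma~\ref{lem cox schenck} supplies, and it is the only nonformal input. The remaining care is bookkeeping: keeping the degree shift by $k$ straight (so that $\ell-k$ and $\ell-2k$, not $\ell-2k$ and $\ell-3k$, appear), and noting that it is the injectivity of $\delta_3$ which upgrades the surjection $R(-2k)^3\twoheadrightarrow\mathrm{im}\,\delta_2$ to the isomorphism $R(-2k)^3/R(-3k)\cong\mathrm{im}\,\delta_2$ used above.
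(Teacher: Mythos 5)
Your proposal is correct and follows essentially the same route as the paper's own proof: reading $\mathrm{Syz}(\f N)$ as $\ker\delta_1$ in the Koszul complex, using Lemma~\ref{lem cox schenck} to compute $\hf(\mathrm{im}\,\delta_2)=\hf(R^3(-2k))-\hf(R(-3k))$, accounting for the degree shift via $\hf(\mathrm{Syz}(\f N),\ell)=\hf(\ker\delta_1,\ell+k)$, and obtaining the inequality from $\mathrm{im}\,\delta_2\subseteq\ker\delta_1$ with equality precisely in the basepoint-free case. Your explicit introduction of the middle homology $H_1$ and the codimension-two/codimension-three dichotomy only spells out steps the paper leaves implicit; the substance is identical.
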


\begin{proof}
From \eqref{eq koszul} and \eqref{eq koszul differentials} we immediately see that $\mathrm{Syz}(\f N)$ is the kernel of the map $\delta_1$. Hence
\begin{equation}\label{eq hf syz ker}
  \hf(\mathrm{Syz}(\f N),\ell)=\hf(\ker\delta_1,\ell+k),
\end{equation}
where the $+k$ term comes from the shift $R(-k)\supset\ker(\delta_1)$.

By Lemma~\ref{lem cox schenck} complex \eqref{eq koszul} is exact except at $\bigoplus_{i=1}^3R(-k)$, hence it is possible to express Hilbert function of the $\mathrm{im}(\delta_2)$
\begin{equation}\label{eq hf im delta2}
  \hf(\mathrm{im}(\delta_2))= \hf(R^3(-2k))-\hf(R(-3k)),
\end{equation}
Now $\mathrm{im}(\delta_2)$ is a submodule of $\ker(\delta_1)$ and moreover $\mathrm{im}(\delta_2)=\ker(\delta_1)$ if and only if $\f N$ has no base point. Thus we have $\hf(\mathrm{ker}(\delta_1))\geq \hf(\mathrm{im}(\delta_2))$ where the equality occurs whenever the normal field is basepoint-free. Substituting \eqref{eq hf shifted} into \eqref{eq hf im delta2} proves the theorem.

%
\end{proof}

To sum up, Lemma~\ref{thm jakkoliv} is only a direct reformulation of Theorem \ref{thm formule} because the following identity holds
\begin{equation}
3{{\ell-k+2}\choose{2}}-{{\ell-2k+2}\choose{2}} = 3{{\ell+2}\choose{2}}-{{\ell+k+2}\choose{2}},
\end{equation}
and $\hf(\mathrm{Syz}(\f N),\ell)$ is exactly the dimension of the set of fields $\f q$ of degree at most $\ell$ orthogonal to $\f n$.

\end{appendix}

\bibliographystyle{elsarticle-harv}
\bibliography{bibliography}{}

%

\end{document}